\newtheorem{theorem}{Theorem}
\newtheorem{property}{Property}
\begin{document}

\title{Epidemic spread in interconnected directed networks}

\author{Junbo Jia$^1$, Zhen Jin$^2$, Xinchu Fu$^{1,}$\thanks{Corresponding author. Tel: +86-21-66132664; Fax: +86-21-66133292; Email address: xcfu@shu.edu.cn} \\ \\
$^1${\small Department of Mathematics, Shanghai University, Shanghai, 200444, China}\\
$^2${\small Complex Systems Research Center, Shanxi University, Taiyuan, Shanxi 030051, China} }



\date{(\today)}


\maketitle

\begin{abstract}
\noindent In the real world, many complex systems interact with other systems. In
addition, the intra- or inter-systems for the spread of information about infectious
diseases and the transmission of infectious diseases are often not random, but with
direction. Hence, in this paper, we build epidemic model based on an interconnected
directed network, which can be considered as the generalization of undirected
networks and bipartite networks. By using the mean-field approach, we establish the
Susceptible-Infectious-Susceptible model on this network. We theoretically analyze
the model, and obtain the basic reproduction number, which is also the generalization
of the critical number corresponding to undirected or bipartite networks. And we
prove the global stability of disease-free and endemic equilibria via the basic
reproduction number as a forward bifurcation parameter. We also give a condition
for epidemic prevalence only on a single subnetwork. Furthermore, we carry out
numerical simulations, and find that the independence between each node's in- and
out-degrees greatly reduce the impact of the network's topological structure on
disease spread.

\medskip

\noindent \textbf{Key words:} Epidemic transmission; interconnected directed network;
basic reproduction number; global stability
\end{abstract}

\section{Introduction}
\label{sec-1}

Many complex systems in the real world can be described by complex networks
~\cite{Dorogovtsev2003}, such as the Internet, WWW of communication system, aviation
networks, railway networks, the metabolic network, gene regulation networks of
biological individuals, friend networks, Facebook or other social networks, etc.
It is no exaggeration to say that networks are everywhere. Some properties of an
actual system can be reflected by the topological structure of corresponding network.
For the 'Six Degree of Separation' theory (also known as the small-world phenomenon)
and the heterogeneity of the number of friends, it can be modelled by small-world
networks~\cite{Watts1998} and scale-free networks~\cite{Barabase1999}, respectively.

There are growing indications that many of real world networks interact with others
~\cite{Liu2016}. For example, in the transportation among cities, there are not
only aviation networks, but also railway networks and road networks. For some
zoonotic diseases (like aftosa, rabies, avian influenza), a human contact network
and an animal network, on which infection relies on, can be considered as whole an
interconnected network. In this paper, we study epidemic spreading dynamics in an
interconnected network, where the nodes in one subnetwork are different from ones
in the other one. By the way, here the interconnected network considered is different
from a multiplex network where all subnetworks may share the same nodes.

According to the directionality of edges, networks can be classified into undirected
networks, directed networks and semi-directed networks~\cite{Zhang2013}. Most of
the previous epidemic models are based on undirected networks~\cite{Pastor2001}.
However, due to the directionality of edges or epidemic spread, it is also suitable
to consider epidemic models based on a directed network. In this network, a susceptible
node receives pathogen only via incoming edges, while an infected node send pathogen
out only via outgoing edges.

In this paper, by using the mean-field approach~\cite{Zhu2015}, we build a
susceptible-infectious-susceptible (SIS) model in an interconnected directed network.
This model can be used to study sexually transmitted diseases, zoonosis, etc. This
foundational network can be seen as a generalization of undirected networks and
bipartite networks. In a special case, if for each directed edge, say $a_{ij}$
(representing directed edge from node $i$ point to node $j$), there exactly exist
a directional opposite edge, $a_{ji}$, then this foundational network can be seen
as an undirected network (or bidirectional network). Alternatively, if there is
only inter-edges  between two subnetworks, without intra-edges within each subnetwork,
then the based network can be seen as a bipartite directed network.

This paper is organized as follows.  In Section~\ref{sec-2}, we establish the SIS
model in an interconnected directed network. In Section~\ref{sec-3} we give a
theoretical analysis with this model, and prove the global stability of disease-free
and endemic equilibria via the basic reproduction number as a forward bifurcation
parameter. Besides, we also give a condition for epidemic prevalence only on a single
subnetwork. In Section~\ref{sec-4}, we perform some numerical simulations to illustrate
and complement our theoretical results. Finally, we summarize some conclusions and
give further discussions in Section~\ref{sec-5}.

\section{SIS model on an interconnected directed network}
\label{sec-2}

This section consists of two subsections, in Section~\ref{subsec-2-1}, we introduce
an interconnected directed network, and give some notations and properties related
to this network. In Section~\ref{subsec-2-2}, we establish an SIS model based on
this network.

\subsection{Interconnected directed networks}
\label{subsec-2-1}

The network we considered here is an interconnected directed network. As shown in
Figure~\ref{fig-1}, this network is composed of two directed subnetworks, subnetwork
$A$ and subnetwork $B$, which are interconnected. The nodes in subnetwork $A$ are
different from the ones in subnetwork  $B$. The nodes of subnetwork $A$ are belong
to some type, while the nodes of subnetwork $B$ are another type. In terms of a
human and animal contact network, the contact network composed of people is regarded
as subnetwork $A$, and the contact network composed of animals is regarded as
subnetwork $B$. Besides, there also exist contacts between two subnetworks.

\begin{figure}[!htbp]                 
  \centering
  \includegraphics[width=7cm]{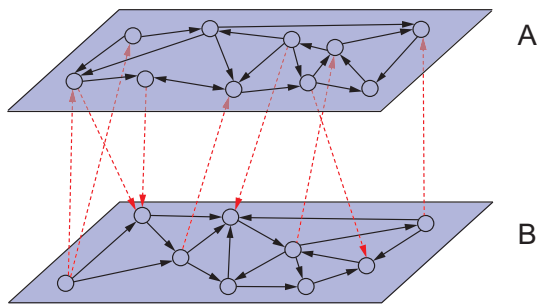}\\
  \caption{Schematic diagram of an interconnected directed network}\label{fig-1}
\end{figure}

Denote by $N$ the total number of nodes in the network, and denote by $N^A$ and
$N^B$ the nodes numbers in subnetworks $A$ and $B$, respectively. So, we have
\begin{equation}\label{eq-1}
  N=N^A+N^B.
\end{equation}

In the network, the connections intra- and inter-subnetworks are directed edges.
For any node in the network, according to the edge direction and the subnetwork
which this node is connected, the attached directed edges can be classified into
the following four types:
\begin{itemize}
  \item Type 1: out-edges pointing to the nodes in subnetwork $A$;
  \item Type 2: in-edges sent out by the nodes in subnetwork $A$;
  \item Type 3: out-edges pointing to the nodes of subnetwork $B$;
  \item Type 4: in-edges sent out by the nodes of subnetwork $B$.
\end{itemize}
Hence, for a node whose number of this four types of directed edges are $k_1, k_2, k_3$
and $k_4$, we say its joint degree is $(k_1,k_2,k_3,k_4)$. We let $N_{k_1, k_2, k_3,k_4}^X$
denote the number of nodes with joint degrees $(k_1, k_2, k_3,k_4)$ in subnetwork
$X$, where the mark $X$ represents $A$ or $B$. We let $n_{1a}$ represent the maximum
of first component in degree $(k_1,k_2,k_3,k_4)$ in subnetwork $A$, and the similar
as $n_{2a}, n_{3a}$, etc. The main notations are shown in Table \ref{tab-1}.

\begin{table}[!htbp]                  
  \centering
  \begin{tabular}{c|l}
\hline
\hline
\textbf{ Notation }   &                                                             
\textbf{ Meaning } ($X$ represent $A$ or $B$)  \\                                   
\hline
\hline
$N_{k_1,k_2,k_3,k_4}^X$   &                                                         
\parbox[1]{10cm}{Number of nodes in $X$ with joint degree $(k_1,k_2,k_3,k_4)$,
$k_1,k_2$ are out-degree and in-degree, respectively, attach to subnetwork $A$,
$k_3,k_4$ are out-degree and in-degree attach to subnetwork $B$}   \\               
\hline
$S_{k_1,k_2,k_3,k_4}^X$ (or $I_{k_1,k_2,k_3,k_4}^X$)   &                            
\parbox[1]{10cm}{Number of susceptible (or infected) nodes in subnetwork $X$ with
joint degree $(k_1,k_2,k_3,k_4)$}   \\                                              
\hline
$s_{k_1,k_2,k_3,k_4}^X$ (or $\rho_{k_1,k_2,k_3,k_4}^X$)   &                         
\parbox[1]{10cm}{Relative density of susceptible (or infected) nodes in subnetwork
$X$ with joint degree $(k_1,k_2,k_3,k_4)$}   \\                                     
\hline
$n_{1a}$(or $n_{2a}$, $n_{3a}$, $n_{4a}$)   &                                       
Maximum degree of $k_1$ (or $k_2$, $k_3$, $k_4$) of nodes in $A$    \\              
\hline
$n_{1b}$(or $n_{2b}$, $n_{3b}$, $n_{4b}$)   &                                       
Maximum degree of $k_1$ (or $k_2$, $k_3$, $k_4$) of nodes in $B$    \\              
\hline
$P_X(k_1,k_2,k_3,k_4)$     &                                                        
\parbox[1]{10cm}{Probability of choosing a random node in $X$ with joint degree
($k_1,k_2,k_3,k_4$)}    \\                                                          
\hline
\hline
\end{tabular}
  \caption{The meaning of main notations.}\label{tab-1}
\end{table}

Let $\Omega$ denote the set of subscripts $\{1,2,3,4\}$, then, for each subnetwork,
the node number $N_{k_1, k_2, k_3,k_4}$ satisfies
\begin{equation}\label{eq-2}
  N^X=\sum_{k_i,i\in \Omega}N_{k_1,k_2,k_3,k_4}^X,
  \qquad X\text{ represent }A \,\text{or}\, B, ~~\text{the same below.}
\end{equation}

On the network, degree distribution $P(k)$ is one of the most fundamental characteristic
quantities. It is defined to be the fraction of nodes in the network with degree $k$,
i.e., $N_k/ N$. Here the degree of node has four components, so we consider the
joint degree distribution. For subnetwork $A$ and $B$, the joint degree distribution
$P_X(k_1,k_2,k_3,k_4)$ is defined respectively as
\begin{equation}\label{eq-3}
  P_X(k_1,k_2,k_3,k_4)=\frac{N_{k_1,k_2,k_3,k_4}^X}{N^X},
\end{equation}
and the	marginal distributions for subnetwork $A$ or $B$ are
\begin{equation}\label{eq-4}
  P_{X}(k_1,\cdot,\cdot,\cdot)=\sum_{k_2,k_3,k_4}P_X(k_1,k_2,k_3,k_4),
\end{equation}
\begin{equation}\label{eq-5}
  P_{X}(\cdot,k_2,\cdot,\cdot)=\sum_{k_1,k_3,k_4}P_X(k_1,k_2,k_3,k_4),
\end{equation}
\begin{equation}\label{eq-6}
  P_{X}(\cdot,\cdot,k_3,\cdot)=\sum_{k_1,k_2,k_4}P_X(k_1,k_2,k_3,k_4),
\end{equation}
\begin{equation}\label{eq-7}
  P_{X}(\cdot,\cdot,\cdot,k_4)=\sum_{k_1,k_2,k_3}P_X(k_1,k_2,k_3,k_4),
\end{equation}
where, $X$ represent $A$ or $B$. Simultaneously, it is easy to verify these marginal
distributions satisfy the normalization condition. And if the joint degree is
independent, then we have
\begin{equation}\label{eq-8}
  P_X(k_1,k_2,k_3,k_4)=P_X(k_1,\cdot,\cdot,\cdot)P_X(\cdot,k_2,\cdot,\cdot)
  P_X(\cdot,\cdot,k_3,\cdot)P_X(\cdot,\cdot,\cdot,k_4).
\end{equation}

Next the mean degree($\phi=1$) and the second moments about zero ($\phi=2$) for
degree are
\[ \langle k_1^\phi\rangle_{a}=\sum_{k_1}k_1^\phi P_A(k_1,\cdot,\cdot,\cdot), \hspace{1.5cm}
\langle k_2^\phi\rangle_{a}=\sum_{k_2}k_2^\phi P_A(\cdot,k_2,\cdot,\cdot), \]
\[ \langle k_3^\phi\rangle_{a}=\sum_{k_3}k_3^\phi P_A(\cdot,\cdot,k_3,\cdot), \hspace{1.5cm}
\langle k_4^\phi\rangle_{a}=\sum_{k_4}k_4^\phi P_A(\cdot,\cdot,\cdot,k_4), \]
\[ \langle k_1^\phi\rangle_{b}=\sum_{k_1}k_1^\phi P_B(k_1,\cdot,\cdot,\cdot), \hspace{1.5cm}
\langle k_2^\phi\rangle_{b}=\sum_{k_2}k_2^\phi P_B(\cdot,k_2,\cdot,\cdot), \]
\[ \langle k_3^\phi\rangle_{b}=\sum_{k_3}k_3^\phi P_B(\cdot,\cdot,k_3,\cdot), \hspace{1.5cm}
\langle k_4^\phi\rangle_{b}=\sum_{k_4}k_4^\phi P_B(\cdot,\cdot,\cdot,k_4). \]

Besides, by the joint degree distribution $P_A(k_1,k_2,k_3,k_4)$, we can obtain
the mixture distribution $P_A(k_1,k_2,\cdot,\cdot)$, as well as mixture moments
$\langle k_1k_2\rangle_{a}$, which satisfies
\[ \langle k_1k_2\rangle_{a}=\sum_{k_1,k_2}k_1k_2 P_A(k_1,k_2 ,\cdot,\cdot). \]
Similarly, we can get other mixture moments $\langle k_2k_3\rangle_{a}$,
$\langle k_1k_2\rangle_{b}$, etc.

Note that there exist relations that the total number of out-edges in $A$ pointing
to $B$ is equal to the ones of in-edges in $B$ coming from $A$, that is,
\[ N^A\langle k_3\rangle_{a}=N^B\langle k_2\rangle_{b}, \]
and, that the total number of in-edges in $A$ coming from $B$ is equal to the number
of out-edges in $B$ pointing to $A$, i.e.,
\[ N^A\langle k_4\rangle_{a}=N^B\langle k_1\rangle_{b}. \]
For subnetworks $A$ and $B$, we also have
\[ \langle k_1\rangle_{a}=\langle k_2\rangle_{a}, \hspace{1.5cm} \langle k_3\rangle_{b}=\langle k_4\rangle_{b}. \]

\subsection{The SIS model on an interconnected directed network}
\label{subsec-2-2}

Now we study the epidemic spread in the interconnected directed network defined
above. Assume that each node must be in an alternative states, susceptible ($S$)
or infected ($I$). We let $I_{k_1,k_2,k_3,k_4}^X$, $X$ represent $A$ or $B$, denote
the number of infected nodes with degree $(k_1,k_2,k_3,k_4)$ in subnetwork $X$,
and $S_{k_1,k_2,k_3,k_4}^X$ denote the number of susceptible nodes. So we have the
following relationships
\begin{equation}\label{eq-9}
S_{k_1,k_2,k_3,k_4}^X+I_{k_1,k_2,k_3,k_4}^X= N_{k_1,k_2,k_3,k_4}^X,
\end{equation}
\[ \sum_{k_i,i\in\Omega} S_{k_1,k_2,k_3,k_4}^X = S^X, \]
\[ \sum_{k_i,i\in\Omega} I_{k_1,k_2,k_3,k_4}^X = I^X. \]

A susceptible node $S$ in subnetwork $A$ can be infected, becoming $I$, only via
in-edges coming from $I$ in subnetwork $A$ or $B$, and the infection rates are
$\lambda_a^a$ and $\lambda_b^a$, respectively. Similarly, a node $S$ in subnetwork
$B$ can be infected only via in-edges coming from $I$ in $A$ or $B$, and the infection
rates are $\lambda_a^b$ and $\lambda_b^b$, respectively. For nodes $I$ whether in
subnetwork $A$ or $B$, they can infect the nodes $S$ in both subnetworks, during
their course of disease. The recovery rate of infected nodes in $A$ and $B$ are
$\mu_a$ and $\mu_b$, respectively.

We assume there exists no degree correlation for any pair of nodes, so the SIS
model is built as follows:
\begin{equation}\label{eq-10}
  \begin{cases}
    \frac{\mathrm{d}I_{k_1,k_2,k_3,k_4}^A(t)}{\mathrm{d}t}=\lambda_a^ak_2S_{k_1,k_2,k_3,k_4}^A(t)\widetilde{\Theta}_a^a(t)
        +\lambda_b^ak_4S_{k_1,k_2,k_3,k_4}^A(t)\widetilde{\Theta}_b^a(t)-\mu_aI_{k_1,k_2,k_3,k_4}^A(t),\\
    \frac{\mathrm{d}I_{k_1,k_2,k_3,k_4}^B(t)}{\mathrm{d}t}=\lambda_a^bk_2S_{k_1,k_2,k_3,k_4}^B(t)\widetilde{\Theta}_a^b(t)
        +\lambda_b^bk_4S_{k_1,k_2,k_3,k_4}^B(t)\widetilde{\Theta}_b^b(t)-\mu_bI_{k_1,k_2,k_3,k_4}^B(t),\\
  \end{cases}
\end{equation}
where, $\widetilde{\Theta}_a^a(t)$ (or $\widetilde{\Theta}_b^a(t)$) represent the
proportion of directed edges coming from infected nodes in total directed edges
coming from nodes in $A$ (or $B$) and pointing to nodes in $A$, $\widetilde{\Theta}_a^b(t)$
(or $\widetilde{\Theta}_b^b(t)$) represent the proportion of directed edges coming
from infected nodes in total directed edges coming from nodes in $A$ (or $B$) and
pointing to nodes in $B$. They are defined as follows:
\[ \widetilde{\Theta}_a^a(t)=\frac{\sum\limits_{l_i,i\in \Omega}l_1I_{l_1,l_2,l_3,l_4}^A}{\sum\limits_{l_i,i\in \Omega}l_1N_{l_1,l_2,l_3,l_4}^A},\hspace{1.5cm}
\widetilde{\Theta}_b^a(t)=\frac{\sum\limits_{l_i,i\in \Omega}l_1I_{l_1,l_2,l_3,l_4}^B}{\sum\limits_{l_i,i\in \Omega}l_1N_{l_1,l_2,l_3,l_4}^B},\]

\[\widetilde{\Theta}_a^b(t)=\frac{\sum\limits_{l_i,i\in \Omega}l_3I_{l_1,l_2,l_3,l_4}^A}{\sum\limits_{l_i,i\in \Omega}l_3N_{l_1,l_2,l_3,l_4}^A},\hspace{1.5cm}
\widetilde{\Theta}_b^b(t)=\frac{\sum\limits_{l_i,i\in \Omega}l_3I_{l_1,l_2,l_3,l_4}^B}{\sum\limits_{l_i,i\in \Omega}l_3N_{l_1,l_2,l_3,l_4}^B}.\]

To simplify this model, we consider the relative densities
$s_{k_1,k_2,k_3,k_4}^X=S_{k_1,k_2,k_3,k_4}^X /N_{k_1,k_2,k_3,k_4}^X$,
$\rho_{k_1,k_2,k_3,k_4}^X=I_{k_1,k_2,k_3,k_4}^X /N_{k_1,k_2,k_3,k_4}^X$,
where $X$ all denote $A$ or $B$ in each equation. By (\ref{eq-2-1}), we have
$s_{k_1,k_2,k_3,k_4}^X+\rho_{k_1,k_2,k_3,k_4}^X= 1$. Hence the model (\ref{eq-10})
becomes

\begin{equation}\label{eq-11}
  \begin{cases}
    \frac{\mathrm{d}\rho_{k_1,k_2,k_3,k_4}^A(t)}{\mathrm{d}t}=\lambda_a^ak_2\left(1-\rho_{k_1,k_2,k_3,k_4}^A(t)\right)\Theta_a^a(t)
        +\lambda_b^ak_4\left(1-\rho_{k_1,k_2,k_3,k_4}^A(t)\right)\Theta_b^a(t)-\mu_a\rho_{k_1,k_2,k_3,k_4}^A(t), \\
    \frac{\mathrm{d}\rho_{k_1,k_2,k_3,k_4}^B(t)}{\mathrm{d}t}=\lambda_a^bk_2\left(1-\rho_{k_1,k_2,k_3,k_4}^B(t)\right)\Theta_a^b(t)
        +\lambda_b^bk_4\left(1-\rho_{k_1,k_2,k_3,k_4}^B(t)\right)\Theta_b^b(t)-\mu_b\rho_{k_1,k_2,k_3,k_4}^B(t), \\
  \end{cases}
\end{equation}
where
\[ \Theta_a^a(t)=\frac{\sum\limits_{l_i,i\in \Omega}l_1I_{l_1,l_2,l_3,l_4}^A}{\sum\limits_{l_i,i\in \Omega}l_1N_{l_1,l_2,l_3,l_4}^A}
                =\frac{\sum\limits_{l_i,i\in \Omega}l_1P_A(l_1,l_2,l_3,l_4)\rho_{l_1,l_2,l_3,l_4}^A}{\sum\limits_{l_i,i\in \Omega}l_1P_A(l_1,l_2,l_3,l_4)}
                =\frac{1}{\langle k_1\rangle_{a}}\sum\limits_{l_i,i\in \Omega}l_1P_A(l_1,l_2,l_3,l_4)\rho_{l_1,l_2,l_3,l_4}^A, \]

\[ \Theta_b^a(t)=\frac{\sum\limits_{l_i,i\in \Omega}l_1I_{l_1,l_2,l_3,l_4}^B}{\sum\limits_{l_i,i\in \Omega}l_1N_{l_1,l_2,l_3,l_4}^B}
                =\frac{\sum\limits_{l_i,i\in \Omega}l_1P_B(l_1,l_2,l_3,l_4)\rho_{l_1,l_2,l_3,l_4}^B}{\sum\limits_{l_i,i\in \Omega}l_1P_B(l_1,l_2,l_3,l_4)}
                =\frac{1}{\langle k_1\rangle_{b}}\sum\limits_{l_i,i\in \Omega}l_1P_B(l_1,l_2,l_3,l_4)\rho_{l_1,l_2,l_3,l_4}^B, \]

\[ \Theta_a^b(t)=\frac{\sum\limits_{l_i,i\in \Omega}l_3I_{l_1,l_2,l_3,l_4}^A}{\sum\limits_{l_i,i\in \Omega}l_3N_{l_1,l_2,l_3,l_4}^A}
                =\frac{\sum\limits_{l_i,i\in \Omega}l_3P_A(l_1,l_2,l_3,l_4)\rho_{l_1,l_2,l_3,l_4}^A}{\sum\limits_{l_i,i\in \Omega}l_3P_A(l_1,l_2,l_3,l_4)}
                =\frac{1}{\langle k_3\rangle_{a}}\sum\limits_{l_i,i\in \Omega}l_3P_A(l_1,l_2,l_3,l_4)\rho_{l_1,l_2,l_3,l_4}^A, \]

\[ \Theta_b^b(t)=\frac{\sum\limits_{l_i,i\in \Omega}l_3I_{l_1,l_2,l_3,l_4}^B}{\sum\limits_{l_i,i\in \Omega}l_3N_{l_1,l_2,l_3,l_4}^B}
                =\frac{\sum\limits_{l_i,i\in \Omega}l_3P_B(l_1,l_2,l_3,l_4)\rho_{l_1,l_2,l_3,l_4}^B}{\sum\limits_{l_i,i\in \Omega}l_3P_A(l_1,l_2,l_3,l_4)}
                =\frac{1}{\langle k_3\rangle_{b}}\sum\limits_{l_i,i\in \Omega}l_3P_B(l_1,l_2,l_3,l_4)\rho_{l_1,l_2,l_3,l_4}^B. \]

Note that the model~(\ref{eq-11}) generalizes SIS models in the following special
networks:
\begin{enumerate}
  \item Undirected networks: For any directed edge in the network, such as $a_{ij}$
    representing that it points to node $j$ by node $i$, if there exactly
    exists an directional opposite directed edges $a_{ji}$, then we say
    that this network can be considered as an undirected network, or as a
    bidirectional network;
  \item Single-layer networks: If we remove the subnetwork $B$ and let
    some related degrees be zeros, such as $n_{3a}, n_{4a}, n_{1b}, n_{2b}, n_{3b}, n_{4b}$,
    then this network will be a single-layer network;
  \item Bipartite networks: For some interconnected directed networks,
  if there exists no intra-edges within each subnetwork, and
    there exist only inter-edges between two subnetworks, then this special
    network is a bipartite network.
\end{enumerate}


\section{Mathematical analysis}
\label{sec-3}

In mathematical epidemiology, the basic reproduction number $R_0$ gives the number
of secondary cases one infectious individual will produce in a population consisting
only of susceptible individuals. Mathematically, the basic reproduction number
$R_0$ plays the role of a threshold value for the epidemic dynamics. If $R_0 > 1$,
the disease will break out, otherwise, the number of infected individuals gradually
declines to zero, and the disease disappears from the population.

Then we will calculate the basic reproduction number $R_0$ for model~(\ref{eq-11}),
so as to study the global dynamical behaviors. For simplicity, we denote
$\mathbf{x}=(x_1, x_2, x_3,\cdots,x_n)=(\rho_{0,0,0,0}^A, \rho_{0,0,0,1}^A,\cdots,\rho_{n_{1b}, n_{2b}, n_{3b}, n_{4b}}^B)$,
where $n=\prod\limits_{i\in \Omega ,j=a,b}(n_{ij}+1)$, and
$\mathbf{f}=(f_1, f_2,\cdots,f_n)$ be the functions of the right-hand side of
(\ref{eq-11}). Then the model (\ref{eq-11}) can been rewritten as
\begin{equation}\label{eq-12}
\frac{\mathrm{d}\mathbf{x}(t)}{\mathrm{d}t}=\mathbf{f(x}(t)).
\end{equation}
It is obvious that there always exists a disease-free equilibrium (DFE)
$E_0=(0,0,\cdots,0)$ in the model (\ref{eq-11}) with $x_i=0, i=1, 2, \cdots,n$.

The basic reproduction number $R_0$ is a crucial parameter in epidemic dynamics.
Next we calculate $R_0$ by estimating the spectral radius of the generation matrix
$\Gamma$~\cite{Driessche2002}. Based on this method, one has $\Gamma=FV^{-1}$,
where $F$ is the rate of new occurring infections and $V$ is the rate of transferring
individuals out of the original group, and the matrix $\Gamma$ can be expressed as
\begin{equation}\label{eq-13}
\begin{aligned}
 \Gamma&=\left[
\begin{matrix}
\lambda_a^a\sum\limits_{l_i,i\in \Omega}\frac{l_1l_2P_A(l_1,l_2,l_3,l_4)}{\mu_a\langle k_1\rangle_{a}}  &
\lambda_a^a\sum\limits_{l_i,i\in \Omega}\frac{l_1l_4P_A(l_1,l_2,l_3,l_4)}{\mu_a\langle k_1\rangle_{a}}  &
0  &
0  \\
0  &
0  &
\lambda_b^a\sum\limits_{l_i,i\in \Omega}\frac{l_1l_2P_B(l_1,l_2,l_3,l_4)}{\mu_b\langle k_1\rangle_{b}}  &
\lambda_b^a\sum\limits_{l_i,i\in \Omega}\frac{l_1l_4P_B(l_1,l_2,l_3,l_4)}{\mu_b\langle k_1\rangle_{b}}  \\
\lambda_a^b\sum\limits_{l_i,i\in \Omega}\frac{l_2l_3P_A(l_1,l_2,l_3,l_4)}{\mu_a\langle k_3\rangle_{a}}  &
\lambda_a^b\sum\limits_{l_i,i\in \Omega}\frac{l_3l_4P_A(l_1,l_2,l_3,l_4)}{\mu_a\langle k_3\rangle_{a}}  &
0  &
0  \\
0  &
0  &
\lambda_b^b\sum\limits_{l_i,i\in \Omega}\frac{l_2l_3P_B(l_1,l_2,l_3,l_4)}{\mu_b\langle k_3\rangle_{b}}  &
\lambda_b^b\sum\limits_{l_i,i\in \Omega}\frac{l_3l_4P_B(l_1,l_2,l_3,l_4)}{\mu_b\langle k_3\rangle_{b}}
\end{matrix}
\right]\\
&=\left[
\begin{matrix}
\frac{\lambda_a^a\langle k_1k_2\rangle_a}{\mu_a\langle k_1\rangle_{a}}  &
\frac{\lambda_a^a\langle k_1k_4\rangle_a}{\mu_a\langle k_1\rangle_{a}}  &
0   &
0   \\
0   &
0   &
\frac{\lambda_b^a\langle k_1k_2\rangle_b}{\mu_b\langle k_1\rangle_{b}}  &
\frac{\lambda_b^a\langle k_1k_4\rangle_b}{\mu_b\langle k_1\rangle_{b}}  \\
\frac{\lambda_a^b\langle k_2k_3\rangle_a}{\mu_a\langle k_3\rangle_{a}}  &
\frac{\lambda_a^b\langle k_3k_4\rangle_a}{\mu_a\langle k_3\rangle_{a}}  &
0   &
0   \\
0   &
0   &
\frac{\lambda_b^b\langle k_2k_3\rangle_b}{\mu_b\langle k_3\rangle_{b}}  &
\frac{\lambda_b^b\langle k_3k_4\rangle_b}{\mu_b\langle k_3\rangle_{b}}
\end{matrix}
\right].
\end{aligned}
\end{equation}

Below we study two special cases: Case 1 is undirected network,
where for each node their out-degree is equal to their in-degree.
In this sense, undirected network is a special directed network
with correlation between every node's out-degree and in-degree;
However in Case~2, for each node the joint degree is independent,
and there exists no correlation between out-degree and in-degree.

Case 1. If we consider this network as an undirected network, i.e.,
for every directed edge $a_{ij}$ there exactly exists an edge $a_{ji}$,
then every node's out-degrees will be equal to the corresponding
in-degree, respectively, i.e., $k_1=k_2,k_3=k_4$. We also assume $k_1$
(or $k_2$) is independent of $k_3$ ($k_4$). Then we have
\[P_X(l_1,l_2,l_3,l_4)=P_X(l_1,\cdot,\cdot,\cdot)P_X(\cdot,\cdot,l_3,\cdot),\]
and
\[\langle k_1\rangle_{a}=\langle k_2\rangle_{a},\qquad \langle k_3\rangle_{a}=\langle k_4\rangle_{a},\]
\[\langle k_1\rangle_{b}=\langle k_2\rangle_{b},\qquad \langle k_3\rangle_{b}=\langle k_4\rangle_{b}.\]

Substituting the above into (\ref{eq-13}), we obtain
\begin{equation}\label{eq-14}
\begin{aligned}
\Gamma_1 & =\left[
\begin{matrix}
\lambda_a^a\frac{\sum\limits_{l_1}l_1^2P_A(l_1,\cdot,\cdot,\cdot)}{\mu_a\langle k_1\rangle_a}   \hspace{1cm}   \lambda_a^a\frac{\sum\limits_{l_1l_3}l_1P_A(l_1,\cdot,\cdot,\cdot)l_3P_A(\cdot,\cdot,l_3,\cdot)}{\mu_a\langle k_1\rangle_a}  &
0   \hspace{4cm} 0   \\
0   \hspace{4cm} 0   &
\lambda_b^a\frac{\sum\limits_{l_1}l_1^2P_B(l_1,\cdot,\cdot,\cdot)}{\mu_b\langle k_1\rangle_b}   \hspace{1cm}   \lambda_b^a\frac{\sum\limits_{l_1l_3}l_1P_B(l_1,\cdot,\cdot,\cdot)l_3P_B(\cdot,\cdot,l_3,\cdot)}{\mu_b\langle k_1\rangle_b}   \\
\lambda_a^b\frac{\sum\limits_{l_1l_3}l_1P_A(l_1,\cdot,\cdot,\cdot)l_3P_A(\cdot,\cdot,l_3,\cdot)}{\mu_a\langle k_3\rangle_a}   \hspace{1cm}  \lambda_a^b\frac{\sum\limits_{l_3}l_3^2P_A(\cdot,\cdot,l_3,\cdot)}{\mu_a\langle k_3\rangle_a}  &
0   \hspace{4cm}
0   \\
0   \hspace{4cm}
0   &
\lambda_b^b\frac{\sum\limits_{l_1l_3}l_1l_3P_B(l_1,\cdot,\cdot,\cdot)P_B(\cdot,\cdot,l_3,\cdot)}{\mu_b\langle k_3\rangle_b}   \hspace{1cm}   \lambda_b^b\frac{\sum\limits_{l_3}l_3^2P_B(\cdot,\cdot,l_3,\cdot)}{\mu_b\langle k_3\rangle_b}
\end{matrix}
\right]\\
&=\left[
\begin{matrix}
\frac{\lambda_a^a\langle k_1^2\rangle_a}{\mu_a\langle k_1\rangle_a}   &
\frac{\lambda_a^a\langle k_3\rangle_a}{\mu_a}   &
0   &
0   \\
0   &
0   &
\frac{\lambda_b^a\langle k_1^2\rangle_b}{\mu_b\langle k_1\rangle_b}   &
\frac{\lambda_b^a\langle k_3\rangle_b}{\mu_b}   \\
\frac{\lambda_a^b\langle k_1\rangle_a}{\mu_a}   &
\frac{\lambda_a^b\langle k_3^2\rangle_a}{\mu_a\langle k_3\rangle_a}   &
0   &
0   \\
0   &
0   &
\frac{\lambda_b^b\langle k_1\rangle_a}{\mu_b}   &
\frac{\lambda_b^b\langle k_3^2\rangle_b}{\mu_b\langle k_3\rangle_b}
\end{matrix}
\right].
\end{aligned}
\end{equation}


Case~2. If we don't consider the correlation among sub-degrees, then
the joint degree distribution is independent, i.e.,
$P_X(l_1,l_2,l_3,l_4)=P_X(l_1,\cdot,\cdot,\cdot) P_X(\cdot,l_2,\cdot,\cdot)
 P_X(\cdot,\cdot,l_3,\cdot)$ $P_X(\cdot,\cdot,\cdot,l_4) $, then we have
\begin{equation}\label{eq-15}
\Gamma_2=\left[
\begin{matrix}
\frac{\lambda_a^a\langle k_2\rangle_a}{\mu_a}   &
\frac{\lambda_a^a\langle k_4\rangle_a}{\mu_a}   &
0   &
0   \\
0   &
0   &
\frac{\lambda_b^a\langle k_2\rangle_b}{\mu_b}   &
\frac{\lambda_b^a\langle k_4\rangle_b}{\mu_b}   \\
\frac{\lambda_a^b\langle k_2\rangle_a}{\mu_a}   &
\frac{\lambda_a^b\langle k_4\rangle_a}{\mu_a}   &
0   &
0   \\
0   &
0   &
\frac{\lambda_b^b\langle k_2\rangle_b}{\mu_b}   &
\frac{\lambda_b^b\langle k_4\rangle_b}{\mu_b}
\end{matrix}
\right].
\end{equation}

Since $\Gamma$ is an irreducible non-negative matrix, by Perron-Frobenius theorem,
there is a positive real eigenvalue which is equal to the spectral radius $\rho(\Gamma)$.
Denote $s(Df(0)) = \max\{Re \lambda; \det(\lambda \mathbf{I} - D\mathbf{f}(0))=0\}$.
It is shown in~\cite{Diekmann1990} that: $s(Df(0)) \leq 0 \Longleftrightarrow \rho(\Gamma) < 1$,
and $s(Df(0)) > 0 \Longleftrightarrow \rho(\Gamma) > 1$. Hence, the basic reproduction
number of model (\ref{eq-11}) is
\begin{equation}\label{eq-16}
  R_0=\rho(\Gamma).
\end{equation}

We have already known that this interconnected directed network is
a generalized network, and the disease spreading have relationship
with the structure of the network. Hence, the $R_0$ also generalize
the basic reproduction numbers for the special cases.
If the network becomes some special cases, such as a single layer network,
or a bipartite network, the $R_0$ is also satisfied, but the expression
of $R_0$ will do some deformation correspondingly. Specific cases is
discussed as follows:
\begin{enumerate}
  \item Single layer network. If we delete the subnetwork $B$
  and the all inter-edges between two subnetworks, then only
  subnetwork $A$ remains, becoming a single layer directed
  network. The corresponding generation matrix is a sub-matrix of $\Gamma$ (\ref{eq-13})
  with their upper left $2\times 2$ items. Obviously we have
  $R_0 = \frac{\lambda_a^a\langle k_1k_2\rangle_a}{\mu_a\langle k_1\rangle_{a}}$; Furthermore, if this
  single layer network is undirected, then each node's out-degree is
  equal to its in-degree, therefore, the generation matrix will
  be a sub-matrix of $\Gamma_1$ (\ref{eq-14}) with the upper left $2\times 2$ items,
  and hence the $R_0$ is
  $\frac{\lambda_a^a\langle k_1^2\rangle_{a}}{\mu_a\langle k_1\rangle_{a}}$;
  Alternatively, if each node's out-degree is independent of its in-degree, then
  the generation matrix is a sub-matrix of $\Gamma_2$ (\ref{eq-15}), and the
  $R_0$ is $\frac{\lambda_a^a\langle k_1\rangle_{a}}{\mu_a}$. These results
  are listed in Table~\ref{tab-2}. Besides, if all the infectious rates,
  intra-subnetworks or inter-subnetworks, are equal, and the recovery rate in subnetwork
  $A$ and $B$ are also equal, that is to say, $\lambda_a^a = \lambda_b^a = \lambda_a^b = \lambda_b^b$,
  $\mu_a=\mu_b$, then this network can also be considered as a single layer network.

  \item Bipartite network. If we delete the intra-edges within subnetworks $A$
  and $B$, the interconnected directed network then becomes a bipartite network.
  The generation matrix becomes the sub-matrix of $\Gamma$ (\ref{eq-13}) with
  central $2\times 2$ items. So the $R_0$ is $\sqrt{\frac{\lambda_a^b\langle k_3k_4\rangle_a}{\mu_a\langle k_3\rangle_{a}}\cdot\frac{\lambda_b^a\langle k_1k_2\rangle_b}{\mu_b\langle k_1\rangle_{b}}}$.
  In the cases of networks of undirected and one in which node's out-degree independent of their in-degree,
  the corresponding $R_0$ are $\sqrt{\frac{\lambda_a^b\langle k_3^2\rangle_a}{\mu_a \langle k_3\rangle_a} \cdot \frac{\lambda_b^a\langle k_1^2\rangle_b}{\mu_b \langle k_1\rangle_b}}$ and
  $\sqrt{\frac{\lambda_a^b\langle k_4\rangle_a}{\mu_a}\cdot \frac{\lambda_b^a\langle k_2\rangle_b}{\mu_b}}$,
  respectively.

  \item Interconnected network. For this situation, though the $R_0$ can not be
  explicitly expressed, by Perron-Frobenius theorem we can give the following estimation:
      $$\min_{i,j}\{c_i,r_j\} \leq R_0 \leq \max_{i,j}\{c_i,r_j\}, $$
  where $c_i$ and $r_j$ are the $i^{th}$ row sums and $j^{th}$ column sums on $\Gamma$,
  respectively. If the interconnected network is undirected, then the estimating method is the same.
  If their joint degree is independent, then we have
  \begin{equation}\label{eq-17}
    R_0=\rho(\Gamma_2)=\frac{1}{2} \left(\frac{\lambda_a^a\langle k_2\rangle_a}{\mu_a} + \frac{\lambda_b^b\langle k_4\rangle_b}{\mu_b}+\sqrt{\left(\frac{\lambda_a^a\langle k_2\rangle_a}{\mu_a}-\frac{\lambda_b^b\langle k_4\rangle_b}{\mu_b}\right)^2+4 \frac{\lambda_a^b\langle k_4\rangle_a}{\mu_a}\cdot \frac{\lambda_b^a\langle k_2\rangle_b}{\mu_b}}\right).
  \end{equation}
\end{enumerate}

\renewcommand\arraystretch{2} 
\begin{table}[!htbp]
\centering
\begin{tabular}{c|c|c|c}
\hline
\hline
$R_0$           &                   
Single layer    &                   
Bipartite       &                   
Interconnected  \\                  
\hline
\hline
Directed        &                   
$\frac{\lambda_a^a\langle k_1k_2\rangle_a}{\mu_a\langle k_1\rangle_{a}}$ \cite{Wang2009,Schwartz2002}   &   
$\sqrt{\frac{\lambda_a^b\langle k_3k_4\rangle_a}{\mu_a\langle k_3\rangle_{a}}\cdot\frac{\lambda_b^a\langle k_1k_2\rangle_b}{\mu_b\langle k_1\rangle_{b}}}$   &               
$\rho(\Gamma)$      \\              
\hline
Undirected      &                   
$\frac{\lambda_a^a\langle k_1^2\rangle_{a}}{\mu_a\langle k_1\rangle_{a}}$ \cite{Pastor2001-scale}   &       
$\sqrt{\frac{\lambda_a^b\langle k_3^2\rangle_a}{\mu_a \langle k_3\rangle_a} \cdot \frac{\lambda_b^a\langle k_1^2\rangle_b}{\mu_b \langle k_1\rangle_b}}$ \cite{Jesus2008}    &   
$\rho(\Gamma_1)$                    \\  
\hline
Directed (Independent)              &   
$\frac{\lambda_a^a\langle k_1\rangle_{a}}{\mu_a}$\cite{Tanimoto2011,Wang2009,Schwartz2002}  &               
$\sqrt{\frac{\lambda_a^b\langle k_4\rangle_a}{\mu_a}\cdot \frac{\lambda_b^a\langle k_2\rangle_b}{\mu_b}}$   &   
$\rho(\Gamma_2)$                    \\  
\hline
\hline
\end{tabular}
\caption{$R_0$ for networks with different connection patterns.}\label{tab-2}
\end{table}


For an interconnected directed network, in addition to $R_0$, we denote by $R_0^{single\,X}$
the basic reproduction number of subnetwork $X$, which
is separated from the whole interconnected network. And we also let $R_0^{bipartite}$
represent the basic reproduction number of the separated bipartite network. Then for Case~2,
we have the following property.

\begin{property}
For the model (\ref{eq-11}), if the joint degree distribution is independent,
then $R_0\geq R_0^{single\,X}$ and $R_0\geq R_0^{bipartite}$.
\end{property}

\begin{proof}
If the joint degree distribution is independent, then
\begin{equation*}
  \begin{split}
     R_0=\rho(\Gamma_2)&=\frac{1}{2} \left(\frac{\lambda_a^a\langle k_2\rangle_a}{\mu_a} + \frac{\lambda_b^b\langle k_4\rangle_b}{\mu_b}+\sqrt{\left(\frac{\lambda_a^a\langle k_2\rangle_a}{\mu_a}-\frac{\lambda_b^b\langle k_4\rangle_b}{\mu_b}\right)^2+4 \frac{\lambda_a^b\langle k_4\rangle_a}{\mu_a}\cdot \frac{\lambda_b^a\langle k_2\rangle_b}{\mu_b}}\right)\\
     & \geq \frac{1}{2} \left(\frac{\lambda_a^a\langle k_2\rangle_a}{\mu_a} + \frac{\lambda_b^b\langle k_4\rangle_b}{\mu_b} +
     \left|\frac{\lambda_a^a\langle k_2\rangle_a}{\mu_a}-\frac{\lambda_b^b\langle k_4\rangle_b}{\mu_b}\right| \right) \\
     & \geq \frac{1}{2} \left(\frac{\lambda_a^a\langle k_2\rangle_a}{\mu_a} + \frac{\lambda_b^b\langle k_4\rangle_b}{\mu_b} +
     \frac{\lambda_a^a\langle k_2\rangle_a}{\mu_a}-\frac{\lambda_b^b\langle k_4\rangle_b}{\mu_b} \right)
     = \frac{\lambda_a^a\langle k_2\rangle_a}{\mu_a}
     = R_0^{single\,A}.
  \end{split}
\end{equation*}

Similarly, we have $R_0\geq R_0^{single\,B}$. In addition, we also have
\begin{equation*}
  \begin{split}
     R_0=\rho(\Gamma_2)&=\frac{1}{2} \left(\frac{\lambda_a^a\langle k_2\rangle_a}{\mu_a} + \frac{\lambda_b^b\langle k_4\rangle_b}{\mu_b}+\sqrt{\left(\frac{\lambda_a^a\langle k_2\rangle_a}{\mu_a}-\frac{\lambda_b^b\langle k_4\rangle_b}{\mu_b}\right)^2+4 \frac{\lambda_a^b\langle k_4\rangle_a}{\mu_a}\cdot \frac{\lambda_b^a\langle k_2\rangle_b}{\mu_b}}  \right)\\
     & \geq \frac{1}{2} \left(\frac{\lambda_a^a\langle k_2\rangle_a}{\mu_a} + \frac{\lambda_b^b\langle k_4\rangle_b}{\mu_b} +
     2\sqrt{ \frac{\lambda_a^b\langle k_4\rangle_a}{\mu_a}\cdot \frac{\lambda_b^a\langle k_2\rangle_b}{\mu_b}}  \right) \\
     & \geq \sqrt{ \frac{\lambda_a^b\langle k_4\rangle_a}{\mu_a}\cdot \frac{\lambda_b^a\langle k_2\rangle_b}{\mu_b}}
     = R_0^{bipartite}.
  \end{split}
\end{equation*}
In summary, we have $R_0\geq R_0^{single\,X}$, $X\in\{A,B\}$, and $R_0\geq R_0^{bipartite}$.
The proof is completed.
\end{proof}

We now present the main result about the dynamical behavior of the model as follows.
\begin{theorem}\label{th-1}
For the model (\ref{eq-11}), if $R_0\leq1$, then the DFE $E_0$ is globally asymptotically stable in $\Delta$; While if $R_0 > 1$, model (\ref{eq-11}) admits a unique endemic equilibrium (EE) $E_1$, which is globally asymptotically stable in $\Delta -\{0\}$, where $\Delta=\{(x_1,x_2,\cdots,x_n)\in R_+^n:0\leqslant x_i \leqslant1, i=1,\cdots,n\}$.
\end{theorem}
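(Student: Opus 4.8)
The plan is to recognize that the vector field $\mathbf{f}$ in (\ref{eq-11}) generates a \emph{cooperative} (order-preserving) and \emph{sublinear} flow on the cube $\Delta$, and then to invoke the dichotomy theorem for such systems (of Lajmanovich--Yorke type; compare the monotone-systems framework of Hirsch and Smith). Writing each $\Theta$ as a linear form with non-negative coefficients, every component has the shape $f_i(\mathbf{x})=(1-x_i)L_i(\mathbf{x})-\mu\,x_i$ with $L_i$ linear and $L_i(\mathbf{x})\ge 0$ on $\Delta$; the whole convergence analysis hinges on exploiting this bilinear structure, so that $R_0=\rho(\Gamma)$ becomes the sole bifurcation parameter.

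First I would verify that $\Delta$ is forward invariant: on the face $\{x_i=0\}$ one has $f_i=L_i(\mathbf{x})\ge 0$, and on the face $\{x_i=1\}$ one has $f_i=-\mu\,x_i<0$, so the flow never leaves the cube and, being bounded on the compact set $\Delta$, exists for all $t\ge 0$. Next I would check cooperativity: for $j\ne i$ the off-diagonal Jacobian entry is $\partial f_i/\partial x_j=(1-x_i)\,\partial L_i/\partial x_j\ge 0$ throughout $\Delta$, so the system is quasimonotone and the semiflow preserves the componentwise order. Using the coupling encoded by the $\Theta$'s --- equivalently the irreducibility of $\Gamma$ already assumed for (\ref{eq-13}) --- the Jacobian is irreducible on the interior (after discarding the trivially slaved classes that have no in-edges or no out-edges), which upgrades monotonicity to strong monotonicity.

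The decisive structural property is sublinearity. A direct computation gives, for $0<\alpha<1$,
\begin{equation*}
 f_i(\alpha\mathbf{x})-\alpha f_i(\mathbf{x})=\alpha(1-\alpha)\,x_i\,L_i(\mathbf{x})\ge 0,
\end{equation*}
so $\mathbf{f}(\alpha\mathbf{x})\ge\alpha\,\mathbf{f}(\mathbf{x})$, with strict inequality in every active coordinate on the interior. For a cooperative, irreducible, strictly sublinear system on the compact convex invariant set $\Delta$ with $\mathbf{f}(0)=0$, the standard dichotomy applies and is governed by the spectral abscissa $s(D\mathbf{f}(0))$: if $s(D\mathbf{f}(0))\le 0$ the origin $E_0$ attracts all of $\Delta$, whereas if $s(D\mathbf{f}(0))>0$ there is a unique interior equilibrium $E_1$ attracting $\Delta\setminus\{0\}$. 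It then remains only to translate the spectral condition into the threshold on $R_0$: since $D\mathbf{f}(0)=F-V$ and, by the next-generation results already invoked \cite{Diekmann1990,Driessche2002}, $\operatorname{sign} s(F-V)=\operatorname{sign}(\rho(FV^{-1})-1)=\operatorname{sign}(R_0-1)$, the cases $s(D\mathbf{f}(0))\le 0$ and $s(D\mathbf{f}(0))>0$ are exactly $R_0\le 1$ and $R_0>1$.

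I expect the main obstacle to be the critical threshold $R_0=1$ and, correspondingly, the global rather than merely local nature of the claims. When $R_0=1$ the linearization at $E_0$ is non-hyperbolic, so local spectral analysis is inconclusive and one must lean on the sublinear structure itself --- for instance a Lyapunov function $V=\mathbf{w}^{\top}\mathbf{x}$ built from the left Perron eigenvector $\mathbf{w}$ of $\Gamma$ together with LaSalle's principle, or a squeezing/comparison argument using the strict concavity --- to rule out any nontrivial $\omega$-limit set. The same strict sublinearity forces uniqueness of $E_1$ when $R_0>1$: two distinct positive equilibria would become comparable after optimal scaling, and strict sublinearity contradicts both being fixed. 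Carefully isolating the irreducible core of degree classes, so that the peripheral decoupled classes are handled by asymptotic slaving, is the remaining bookkeeping step.
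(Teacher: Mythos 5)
Your proposal is correct and follows essentially the same route as the paper: the paper likewise establishes positive invariance of $\Delta$, then verifies cooperativity, irreducibility of $Df$, $f(0)=0$, and strict sublinearity, and concludes by citing Corollary 3.2 of Zhao and Jing \cite{Zhao1996}, which is exactly the Lajmanovich--Yorke--type dichotomy for cooperative irreducible sublinear systems that you invoke (with the $s(Df(0))$ versus $\rho(\Gamma)$ sign equivalence handled via \cite{Diekmann1990} just as you describe). Your additional remarks on the critical case $R_0=1$ and on uniqueness via sublinearity are simply the internals of that cited theorem, which the paper treats as a black box.
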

\begin{proof}
First, we prove that the set $\Delta$ is positive invariant
for model (\ref{eq-11}). We claim that if initial value $x(0)\in \Delta$,
then $x_i(t)\geqslant 0,\forall t>0,~ i=1,\cdots n$.
Otherwise, there would exist a $k_0 \in \{1,\cdots n\}$ and $t_0 > 0$,
such that $x_{k_0}<0$. Since $x_{k_0}<0$ is continuous, by the zero
point theorem, there must exist $t_1$ such that $x_{k_0}(t_1)=0$.
Let $t^*$ be the infimum of all zero points for all functions $x_i(t)$,
i.e., $t^*=\inf \{t>0, x_i(t)=0,i=1,2,\cdots,n\}$. Without loss of generality,
let $x_{k_1}(t^*)=\rho_{k_1,k_2,k_3,k_4}^A(t^*)=0$, then by the definition
of $t^*$, we have $d\rho_{k_1,k_2,k_3,k_4}^A(t^*)/dt\leqslant 0$.
However, by the model (\ref{eq-11}), $d\rho_{k_1,k_2,k_3,k_4}^A(t^*)/ dt = \lambda_a^a k_2\Theta_{11}(t^*)+\lambda_b^ak_4\Theta_{21}(t)>0$ is derived, which contradicts the
proceeding formula. So $x_i(t)\geqslant 0$ holds for all $t>0$ and $i=1,2,\cdots,n$.
Similarly, we can prove $s_{k_1,k_2,k_3,k_4}^A\geqslant 0$ and $s_{k_1,k_2,k_3,k_4}^B\geqslant 0$.
Because of $s_{k_1,k_2,k_3,k_4}^A=1-\rho_{k_1,k_2,k_3,k_4}^A$ and $s_{k_1,k_2,k_3,k_4}^B=1-\rho_{k_1,k_2,k_3,k_4}^B$, we can get $\rho_{k_1,k_2,k_3,k_4}^A\leqslant 1$
and $\rho_{k_1,k_2,k_3,k_4}^B\leqslant 1$. Hence $0 \leqslant x_{i}(t) \leqslant 1$ hold for
all $t>0$ and $i=1,2,\cdots n$. Thus the claim is proven.

Next, we verify that model (\ref{eq-11}) satisfies the conditions
of Corollary 3.2 in \cite{Zhao1996}. It is clear that the function
$f: \Delta\rightarrow R^n$ defined in (\ref{eq-11}) is continuously differentiable.
Direct calculation yields $ \partial f_i/\partial x_j \geqslant 0$ for $x\in \Delta$
and $i \neq j$, which implies that $f$ is cooperative.
Further, we know that $Df=(\partial f_i/\partial x_j)_{1\leqslant i,j\leqslant n}$
is irreducible for every $x\in \Delta$. At the
same time, it is obvious that $f(0)=0$ and $f_i(x)\geqslant 0$ for all
$x\in \Delta$ with $x_i=0,i=1,2,\cdots,n $. Moreover, since for any
$\varepsilon \in (0,1)$ and $x\gg 0, f_i(\varepsilon x)\geqslant \varepsilon f_i(x),
i=1,2,\cdots,n$, we have that $f$ is strictly sublinear in $\Delta$.
By applying Corollary 3.2 in \cite{Zhao1996}, the proof is completed.
\end{proof}

Theorem \ref{th-1} give the condition that epidemic whether or not prevalent on whole
interconnected network, and the next theorem give a condition that epidemic prevalent
only on the single subnetwork.

\begin{theorem}\label{th-2}
If there exists an interconnected directed network, and $R_0>1$, but some subnetwork's
$R_0$ is less than one, for example $R_0^{single\,A}<1$, in addition, this network is one-way~\cite{Wang2016},
i.e., there exist no directed edges from subnetwork $B$ to subnetwork $A$ or $\lambda_b^a=0$.
Then there exist two equilibria for model~(\ref{eq-11}), one is DFE $E_0$ which is a saddle,
and another is boundary EE $E_1$ which is globally
asymptotically stable in $\Delta -\{0\}$.
\end{theorem}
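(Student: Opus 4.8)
The plan is to use the one-way assumption $\lambda_b^a=0$ to decouple subnetwork $A$ from subnetwork $B$, turning (\ref{eq-11}) into a triangular cooperative system, and then to analyze the upstream block $A$ and the downstream block $B$ in turn. First I would read off the consequence of $\lambda_b^a=0$ for the next-generation matrix: it annihilates the entire second block-row of $\Gamma$ in (\ref{eq-13}), so, writing $\gamma_{11}=\frac{\lambda_a^a\langle k_1k_2\rangle_a}{\mu_a\langle k_1\rangle_a}$ and $\gamma_{44}=\frac{\lambda_b^b\langle k_3k_4\rangle_b}{\mu_b\langle k_3\rangle_b}$, a direct cofactor expansion gives
\[ \det(\lambda\mathbf{I}-\Gamma)=\lambda^{2}(\lambda-\gamma_{11})(\lambda-\gamma_{44}). \]
Hence $R_0=\rho(\Gamma)=\max\{\gamma_{11},\gamma_{44}\}=\max\{R_0^{single\,A},R_0^{single\,B}\}$, and the hypotheses $R_0>1$ and $R_0^{single\,A}<1$ force $R_0^{single\,B}=R_0>1$. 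This identification underpins everything that follows.

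Next I would locate the equilibria. Because $\lambda_b^a=0$, the $A$-equations of (\ref{eq-11}) do not involve $\rho^B$ and are exactly the isolated subnetwork-$A$ SIS system; since $R_0^{single\,A}<1$, Theorem~\ref{th-1} applied to this block forces $\rho^A\equiv0$ (so $\Theta_a^a=0$) at every equilibrium. Substituting $\rho^A=0$, hence $\Theta_a^b=0$, into the $B$-equations leaves precisely the isolated subnetwork-$B$ system, which by Theorem~\ref{th-1} with $R_0^{single\,B}>1$ possesses a unique positive equilibrium $\rho^{B*}$. This produces the boundary endemic equilibrium $E_1=(0,\rho^{B*})$ alongside the always-present $E_0$. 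To show $E_0$ is a saddle I would linearize: with $\lambda_b^a=0$ the Jacobian is block lower-triangular, $D\mathbf{f}(0)=\left(\begin{smallmatrix}J_{AA}&0\\ J_{BA}&J_{BB}\end{smallmatrix}\right)$, where $J_{AA}$ and $J_{BB}$ are the disease-free Jacobians of the isolated $A$- and $B$-systems, so $\sigma(D\mathbf{f}(0))=\sigma(J_{AA})\cup\sigma(J_{BB})$. By the same threshold equivalence used for (\ref{eq-16}), $s(J_{AA})<0$ (because $R_0^{single\,A}<1$) while $s(J_{BB})>0$ (because $R_0^{single\,B}>1$); thus $E_0$ carries both stable and unstable directions and is a saddle.

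The core of the proof is the global attractivity of $E_1$ on $\Delta-\{0\}$, for which I would invoke the theory of asymptotically autonomous semiflows. The autonomous $A$-block has $R_0^{single\,A}<1$, so Theorem~\ref{th-1} gives $\rho^A(t)\to0$ and therefore the cross-coupling $\Theta_a^b(t)\to0$; the $B$-block is then asymptotically autonomous with limit equation the isolated subnetwork-$B$ model, whose endemic equilibrium $\rho^{B*}$ is globally asymptotically stable on $\Delta_B-\{0\}$ with an unstable DFE (Theorem~\ref{th-1} again). Reducing the $\omega$-limit set of the full trajectory to an internally chain transitive set of the limiting $B$-semiflow, and combining this with uniform persistence of the infection in $B$, I would conclude that this $\omega$-limit set is exactly $\{E_1\}$ for every initial datum in $\Delta-\{0\}$.

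The main obstacle is precisely this persistence step: the term $\Theta_a^b(t)$ that seeds $B$ from $A$ vanishes in the limit, so persistence cannot be borrowed from the coupling and must instead be extracted from $R_0^{single\,B}>1$ alone. I would dispose of the awkward datum $\rho^B(0)=0,\ \rho^A(0)\neq0$ by noting that irreducibility within $A$ makes every $\rho^A_k(t)>0$ for $t>0$, so $\Theta_a^b(t)>0$ renders $\rho^B$ strictly positive immediately; the delicate point that remains is to establish a uniform weak-repeller estimate at the subnetwork-$B$ DFE that holds uniformly in the decaying forcing $\Theta_a^b(t)$, which is what ultimately rules out convergence of the full non-autonomous trajectory back to $E_0$.
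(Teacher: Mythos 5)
Your route is genuinely different from the paper's, and in several places more careful. The paper's proof is very short: it observes that the $A$-equations decouple, applies Theorem~\ref{th-1} to the isolated $A$-system (with $R_0^{single\,A}<1$) to get $\rho^A\to 0$, and then simply applies Theorem~\ref{th-1} again to the \emph{full} system (using $R_0>1$) to assert that $E_1$ exists, is globally asymptotically stable in $\Delta-\{0\}$, and must therefore lie on the boundary with $E_0$ a saddle. Your additional steps --- the factorization $\det(\lambda\mathbf{I}-\Gamma)=\lambda^{2}(\lambda-\gamma_{11})(\lambda-\gamma_{44})$ giving $R_0=\max\{R_0^{single\,A},R_0^{single\,B}\}$, the block lower-triangular Jacobian argument for the saddle property of $E_0$, and the explicit identification $E_1=(0,\rho^{B*})$ --- are all correct and do not appear in the paper. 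More importantly, you have put your finger on why the paper's shortcut is delicate: with $\lambda_b^a=0$ the vector field is reducible on $\Delta$, so the irreducibility hypothesis of Corollary~3.2 in \cite{Zhao1996}, on which the proof of Theorem~\ref{th-1} rests, does not apply verbatim to the one-way system; a limiting-system or persistence argument of the kind you sketch is genuinely what is needed to justify the global stability claim.

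That said, your proposal stops short of closing the one step that carries the real difficulty. The asymptotically autonomous framework reduces everything to showing that the $\omega$-limit set of a trajectory starting in $\Delta-\{0\}$ cannot be $\{E_0\}$, and you explicitly defer the ``uniform weak-repeller estimate at the subnetwork-$B$ DFE'' rather than prove it. To finish: since $s(J_{BB})>0$, choose $\epsilon>0$ so small that the linearization of the $B$-block at its DFE, perturbed by the factors $(1-\epsilon)$ coming from $\rho^B_k\le\epsilon$, still has positive stability modulus; a comparison argument against the corresponding Perron eigenvector then yields $\limsup_{t\to\infty}\|\rho^B(t)\|\ge\delta>0$ for every solution with $\rho^B(t_0)\neq 0$, which together with your chain-transitivity reduction forces the $\omega$-limit set to be $\{E_1\}$. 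Two smaller caveats: your claim that $\rho^A(0)\neq 0$ immediately seeds $B$ requires that the $A$-block be irreducible and that some node of $A$ reachable within $A$ have nonzero out-degree toward $B$ --- this should be stated as a hypothesis rather than assumed silently; and the theorem as stated (and your proof) tacitly assumes the isolated $B$-system itself satisfies the irreducibility needed to apply Theorem~\ref{th-1} to it. With these points filled in, your argument is complete and rests on firmer ground than the paper's.
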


\begin{proof}
First, we prove that the zero is the only equilibrium for $\rho_{k_1,k_2,k_3,k_4}^A(t)$ in $\Delta$.
Since there exist no directed edges from subnetwork $B$ to subnetwork $A$ or $\lambda_b^a=0$,
model (\ref{eq-11}) becomes
\begin{equation}\label{eq-18}
  \begin{cases}
    \frac{\mathrm{d}\rho_{k_1,k_2,k_3,k_4}^A(t)}{\mathrm{d}t}=\lambda_a^ak_2\left(1-\rho_{k_1,k_2,k_3,k_4}^A(t)\right)\Theta_a^a(t)
        -\mu_a\rho_{k_1,k_2,k_3,k_4}^A(t),\\
    \frac{\mathrm{d}\rho_{k_1,k_2,k_3,k_4}^B(t)}{\mathrm{d}t}=\lambda_a^bk_2\left(1-\rho_{k_1,k_2,k_3,k_4}^B(t)\right)\Theta_a^b(t)
        +\lambda_b^bk_4\left(1-\rho_{k_1,k_2,k_3,k_4}^B(t)\right)\Theta_b^b(t)-\mu_b\rho_{k_1,k_2,k_3,k_4}^B(t).\\
  \end{cases}
\end{equation}
From (\ref{eq-18}), we understand the derivative of $\rho_{k_1,k_2,k_3,k_4}^A(t)$
have nothing to do with $\rho_{k_1,k_2,k_3,k_4}^B(t)$. So, the first equation of
(\ref{eq-18}) can be regarded as an SIS model in subnetwork $A$ as a separated
single layer network. Since $R_0^{single\,A}<1$, according to Theorem \ref{th-1},
we obtain that the zero is the only equilibrium for $\rho_{k_1,k_2,k_3,k_4}^A(t)$ in $\Delta$,
and it is globally asymptotically stable. That is to say, $\rho_{k_1,k_2,k_3,k_4}^A(t)$
tends to zero for all joint degree $(k_1,k_2,k_3,k_4)$ in subnetwork $A$.

While for the whole interconnected network, we have $R_0>1$, so model~(\ref{eq-11}) must
have two equilibria, the DFE $E_0$ and the EE $E_1$, and $E_1$ is globally
asymptotically stable in $\Delta -\{0\}$. Then there must exist
some joint degree $(k_1^*,k_2^*,k_3^*,k_4^*)$ in the subnetwork $B$, such that
$\rho_{k_1^*,k_2^*,k_3^*,k_4^*}^B(t)$ does not tend to zero but tends to $E_1$.
So the DFE $E_0$ is a saddle and the EE $E_1$ is on the boundary of
$\Delta -\{0\}$. The proof is completed.
\end{proof}

\section{Numerical analysis}
\label{sec-4}

To illustrate and complement the above theoretical analysis and further explore the disease dynamics,
we perform numerical simulations for model (\ref{eq-11}) on different networks.
Here we mainly study the situation which node's joint degree distribution is independent.

For the basic reproduction number $R_0$, we have known that it is equal to
the spectral radius of $\Gamma$, i.e., $R_0=\rho(\Gamma)$. With regard
to $\Gamma_2$ in (\ref{eq-14}), we calculate its spectral radius with the
given parameters. As shown in Figure~\ref{fig-2},
except the variable presented in the figures, other infectious
rates are fixed to 0.1, mean degree $\langle k_i\rangle_X$ are fixed to 4.
Besides, the recovery rates are set to one, $\mu_a=\mu_b=1$.
We can see that the $R_0$ is increasing with the increase of infectious rate $\lambda$
and mean degree $\langle k\rangle$. And the influence infectious rate
on $R_0$ is the same as the influence of mean degree. In addition, the
inner infectious rates, i.e., $\lambda_a^a$ and $\lambda_b^b$ have greater influence
than the cross infectious rates $\lambda_a^b$ and $\lambda_b^a$ on $R_0$ in the
same parameter value, and the intra- mean
degree $\langle k_2\rangle_a,\langle k_4\rangle_b$ also have greater
influence than inter- mean degrees $\langle k_4\rangle_a$ and $\langle k_2\rangle_b$.

\begin{figure}[!h]
  \centering
  \subfigure{
  \begin{minipage}{7cm}
  (a)
  \centering
  \includegraphics[width=6cm]{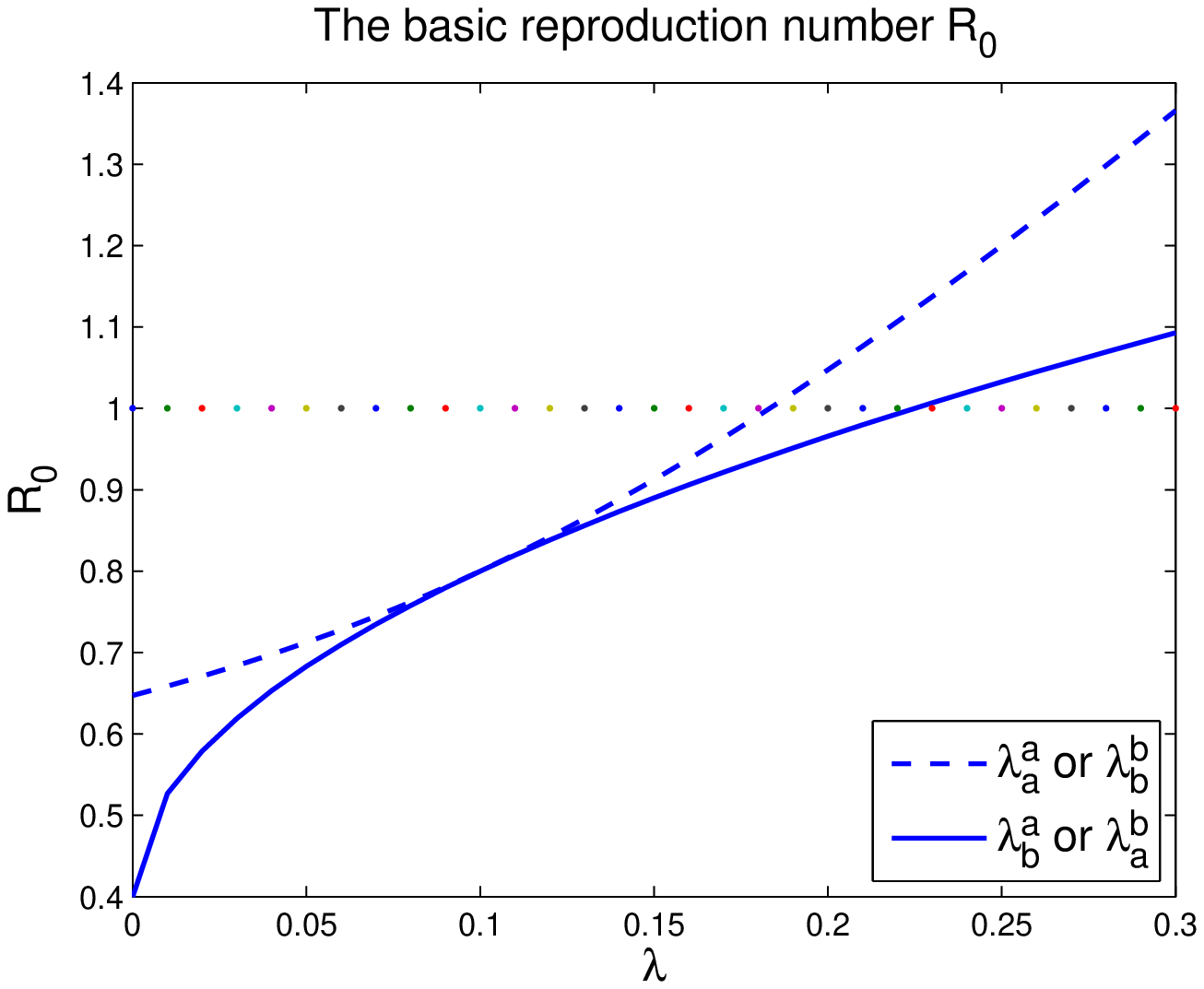}
  \end{minipage}
  }
  \subfigure{
  \begin{minipage}{7cm}
  (b)
  \centering
  \includegraphics[width=6cm]{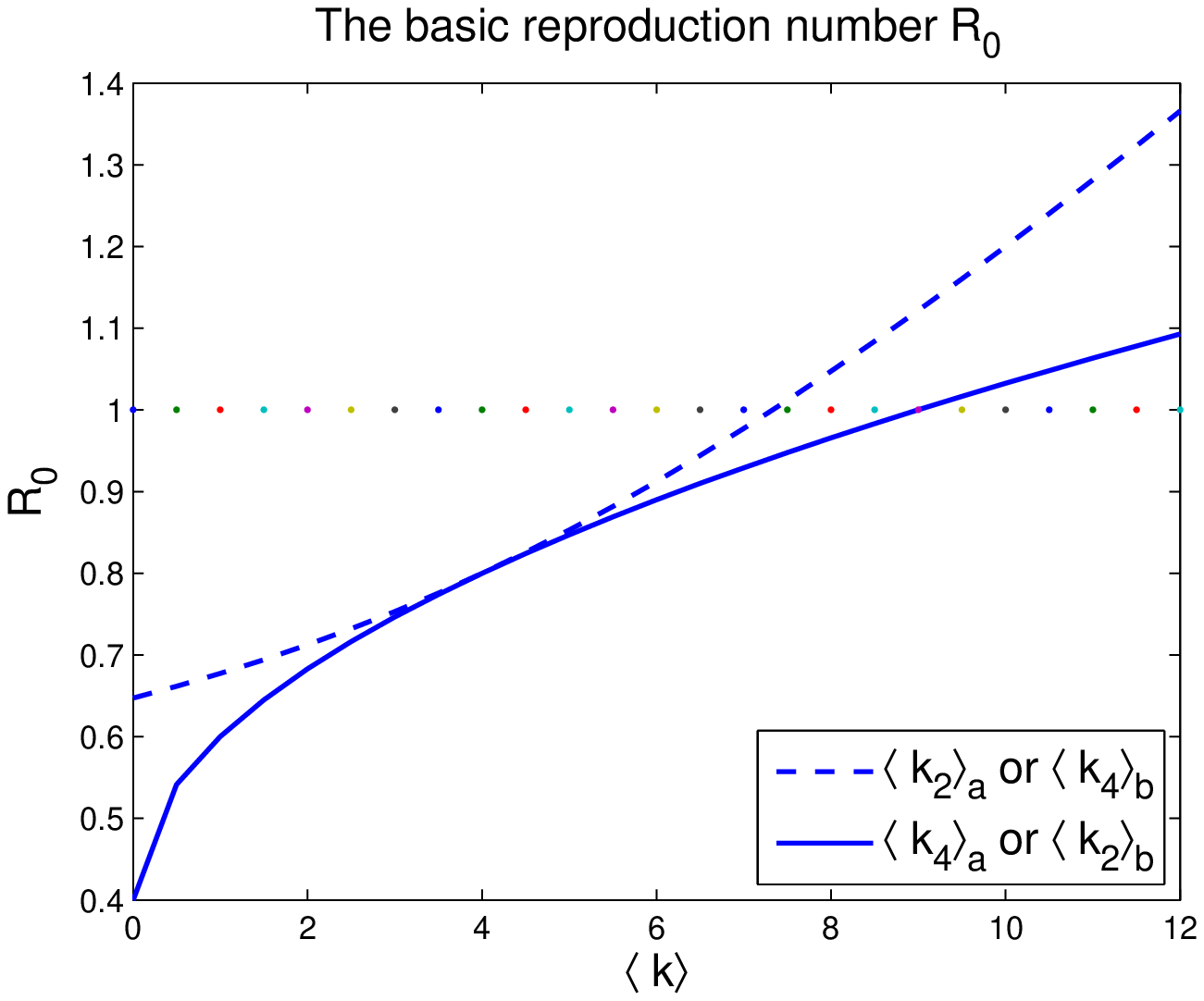}
  \end{minipage}
  }
  \caption{The effects of the infectious rate and mean degree on the basic reproduction number $R_0$.}\label{fig-2}
\end{figure}

Figure~\ref{fig-3} are also the influence graph of infectious rate
and mean degree on the basic reproduction number. Compared with Figure~\ref{fig-2},
Figure~\ref{fig-3} contain more information. The $R_0$ increases with
the increase of infectious rate or mean degree.

\begin{figure}[!h]
  \centering
  \subfigure{
  \begin{minipage}{7cm}
  (a)
  \centering
  \includegraphics[width=6cm]{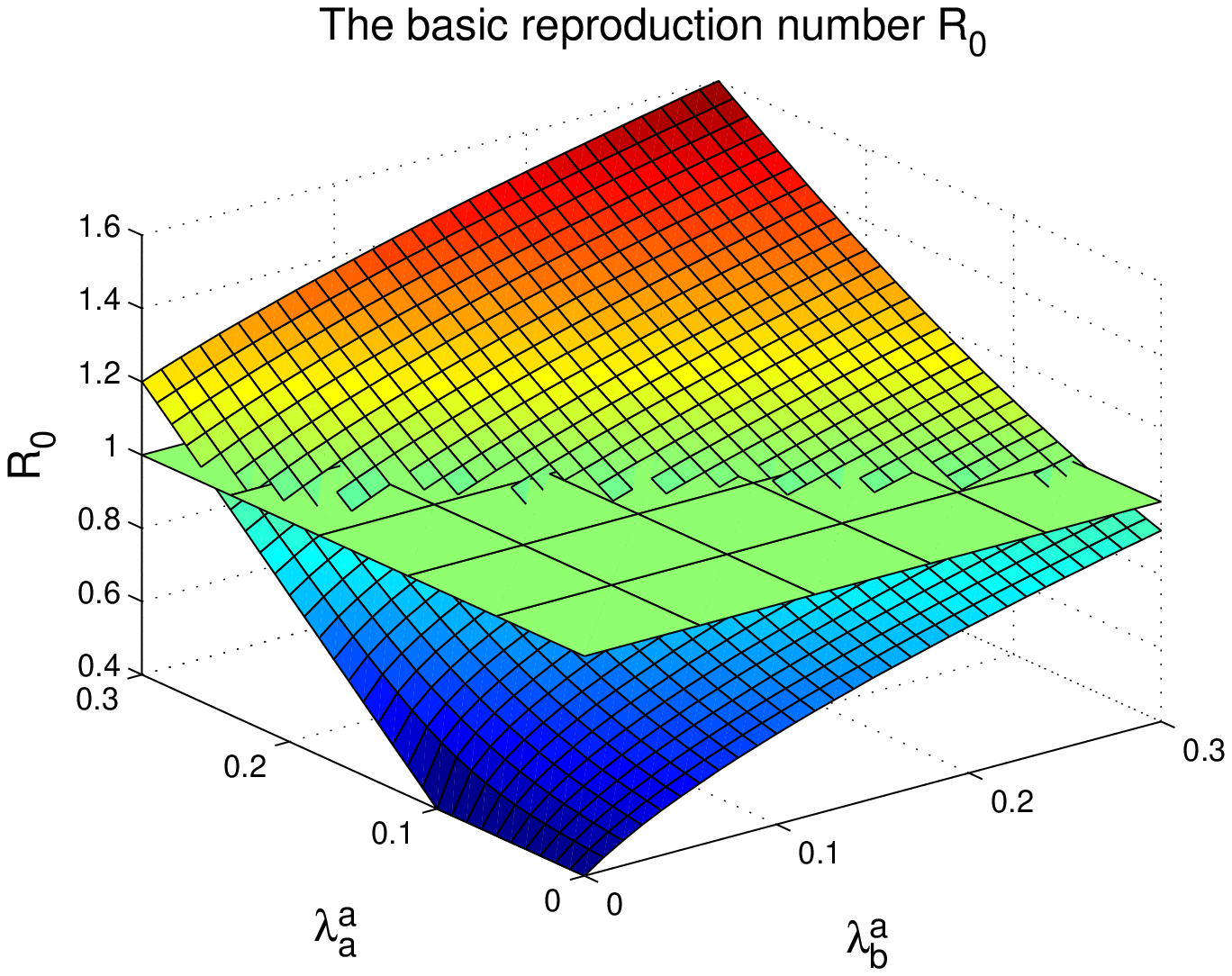}
  \end{minipage}
  }
  \subfigure{
  \begin{minipage}{7cm}
  (b)
  \centering
  \includegraphics[width=6cm]{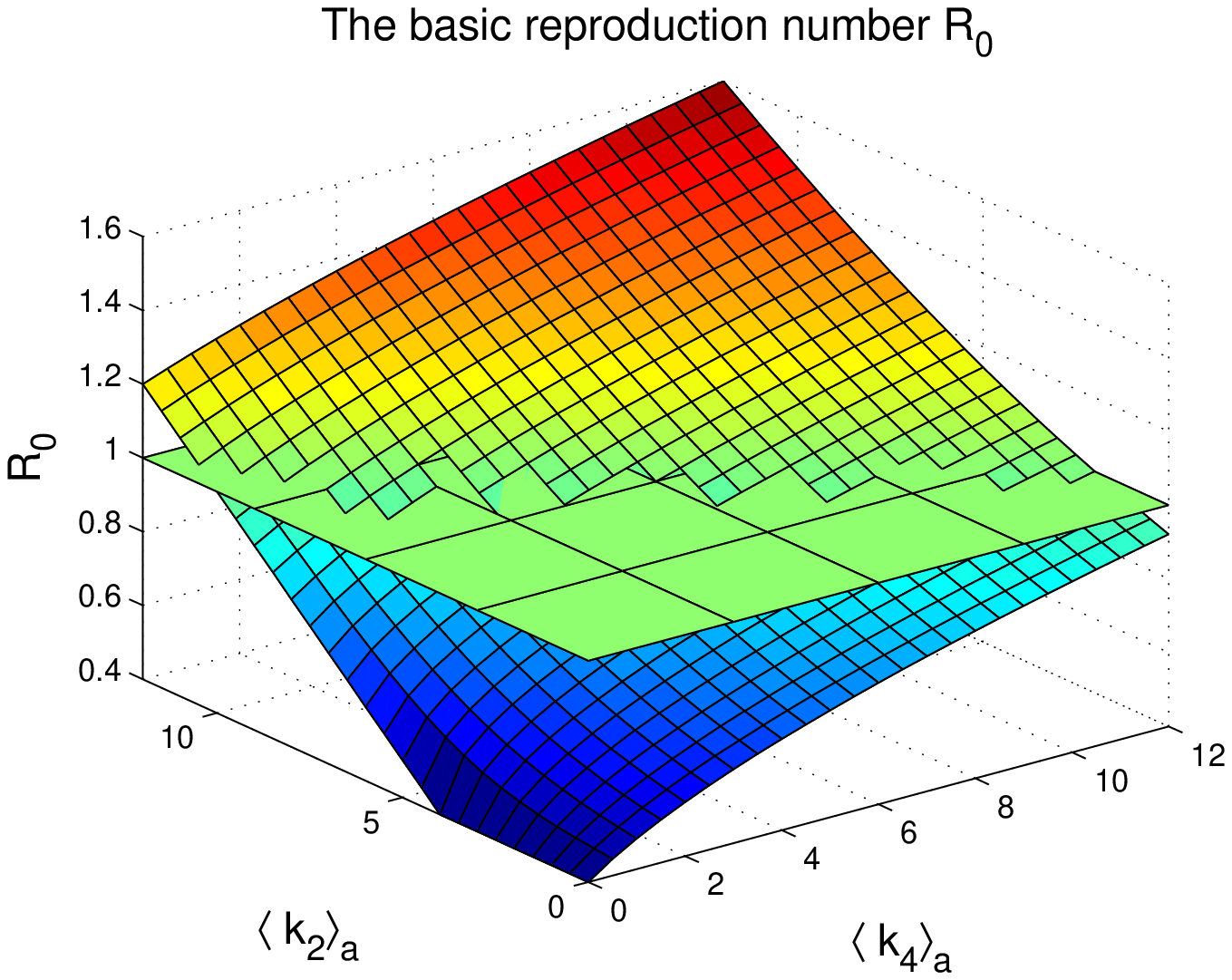}
  \end{minipage}
  }
  \caption{The effects of the infection rate on the basic reproduction number $R_0$.}\label{fig-3}
\end{figure}

Here we constructed two interconnected directed networks. For the first one,
the node number of subnetwork $A$ and $B$ are both 5000, and the sub-degree
$k_i$ of joint degree $(k_1,k_2,k_3,k_4)$ satisfies the Poisson distribution.
This network is a generalized ER network, so we denote this network by 'ER'.
For the second one, the node numbers are
both 5000, the same as above, however, the sub-degrees satisfy the
Power-law distribution with exponent 3. This network is a special scale-free
network, so we denote this network by 'SF'.
For the two networks, the same is the joint degree of every node is independent,
that is to say, the relation (\ref{eq-8}) is satisfied. And the mean degrees
both are $\langle k_1\rangle_a=\langle k_2\rangle_a=4$, $\langle k_3\rangle_a=\langle k_2\rangle_b=6$, $\langle k_4\rangle_a=\langle k_1\rangle_b=8$, $\langle k_3\rangle_b=\langle k_4\rangle_b=10$.

\begin{figure}[!ht]
  \centering
  \subfigure{
  \begin{minipage}{7cm}
  (a)
  \centering
  \includegraphics[width=6cm]{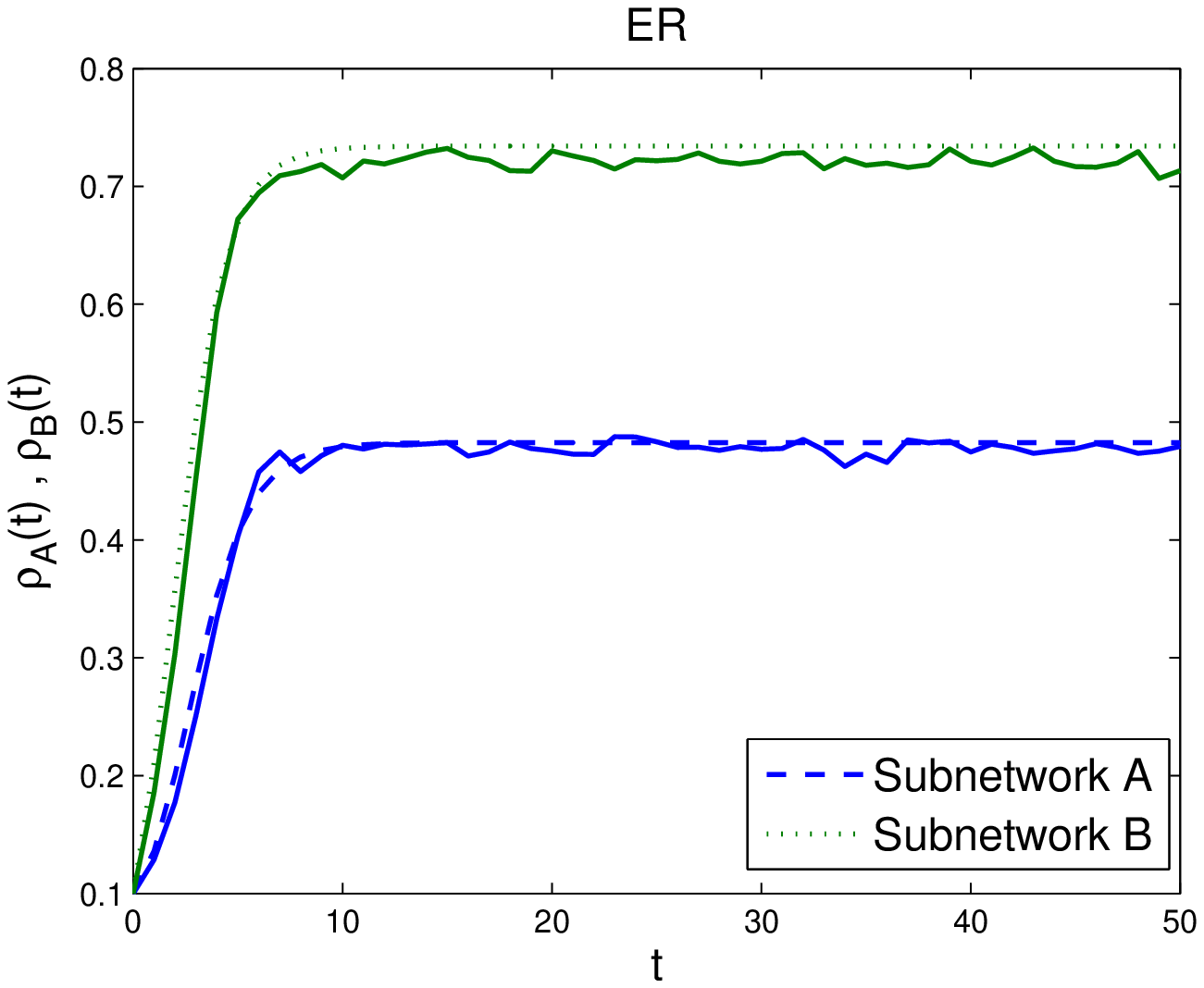}
  \end{minipage}
  }
  \subfigure{
  \begin{minipage}{7cm}
  (b)
  \centering
  \includegraphics[width=6cm]{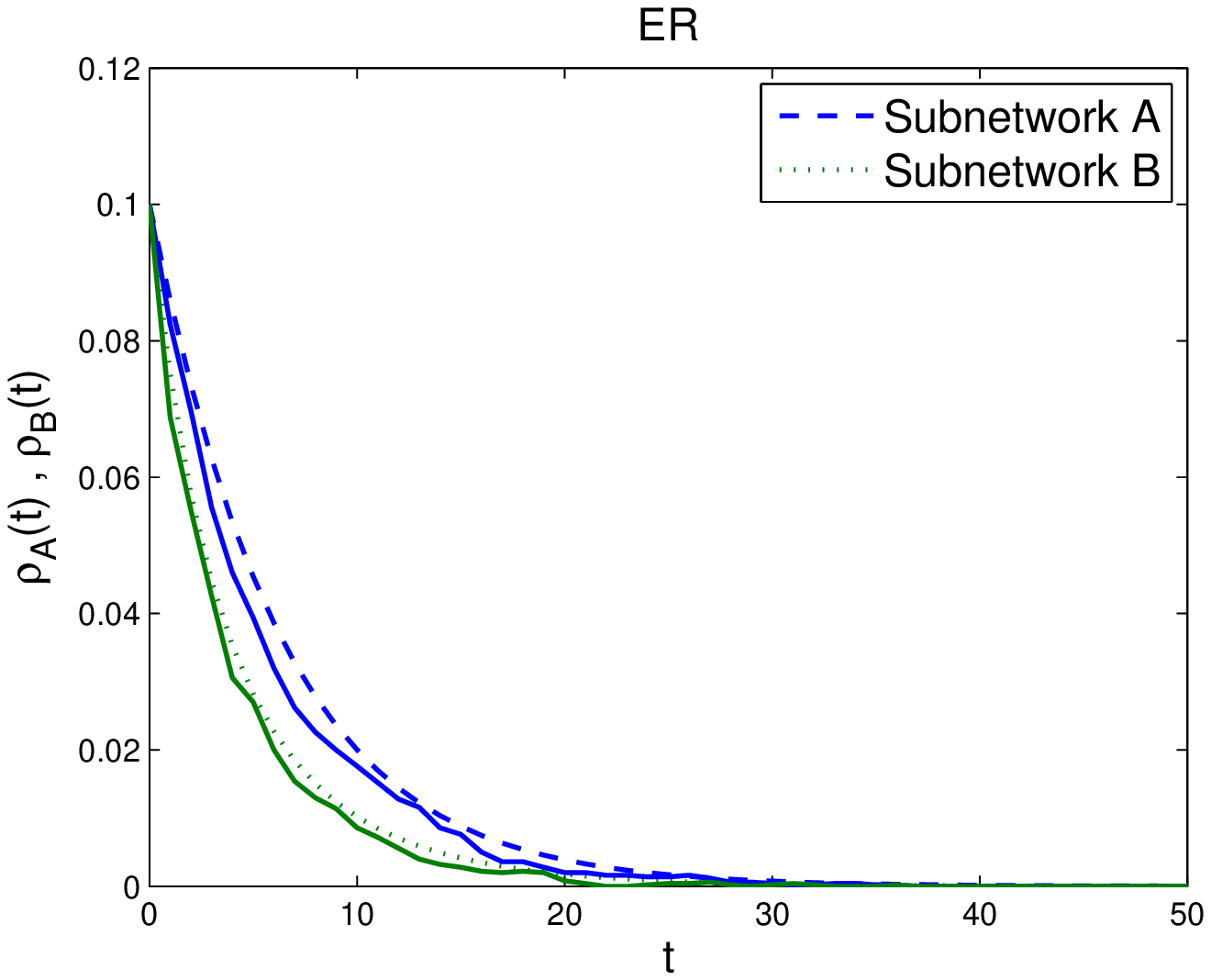}
  \end{minipage}
  }\\
  \subfigure{
  \begin{minipage}{7cm}
  (c)
  \centering
  \includegraphics[width=6cm]{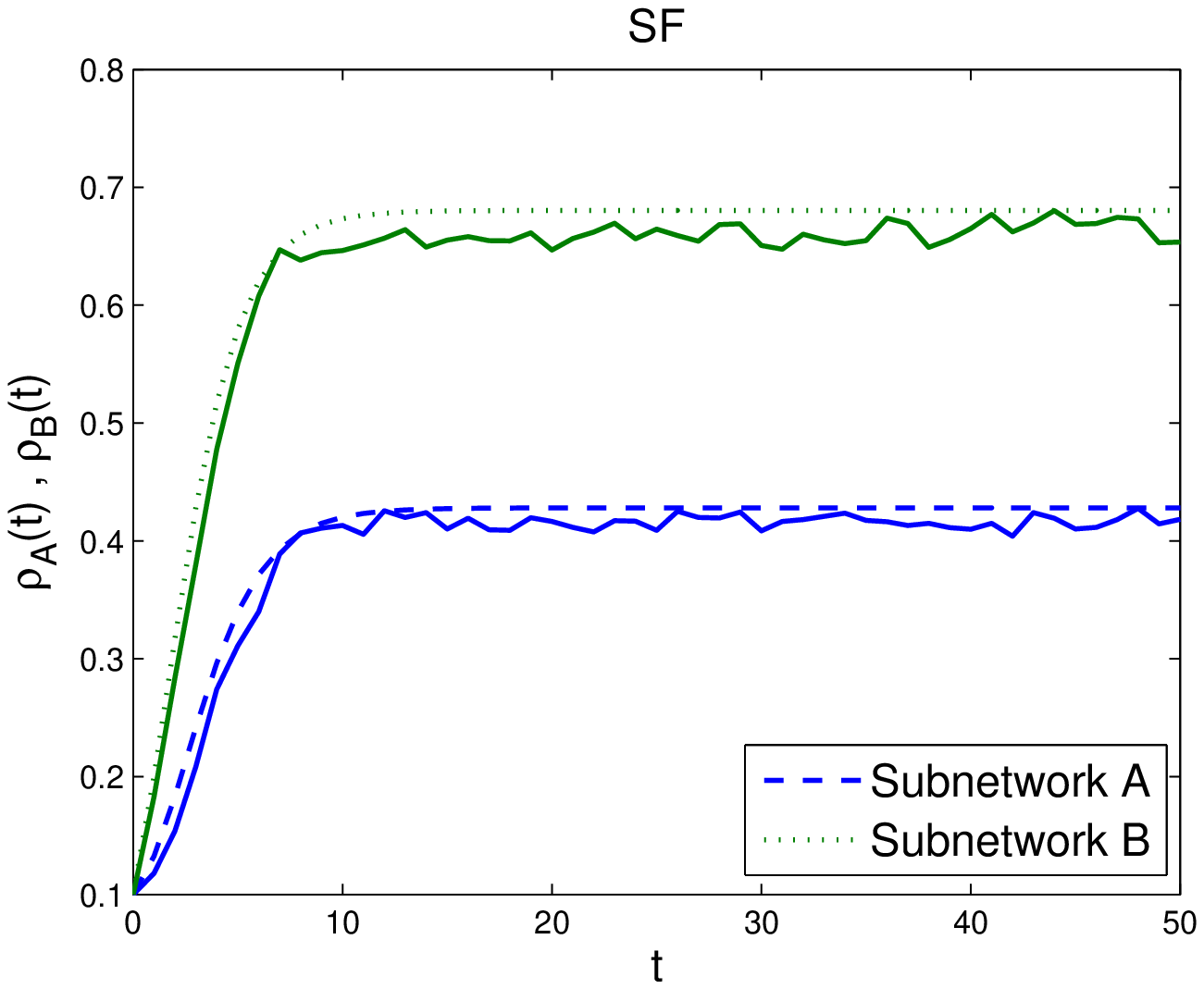}
  \end{minipage}
  }
  \subfigure{
  \begin{minipage}{7cm}
  (d)
  \centering
  \includegraphics[width=6cm]{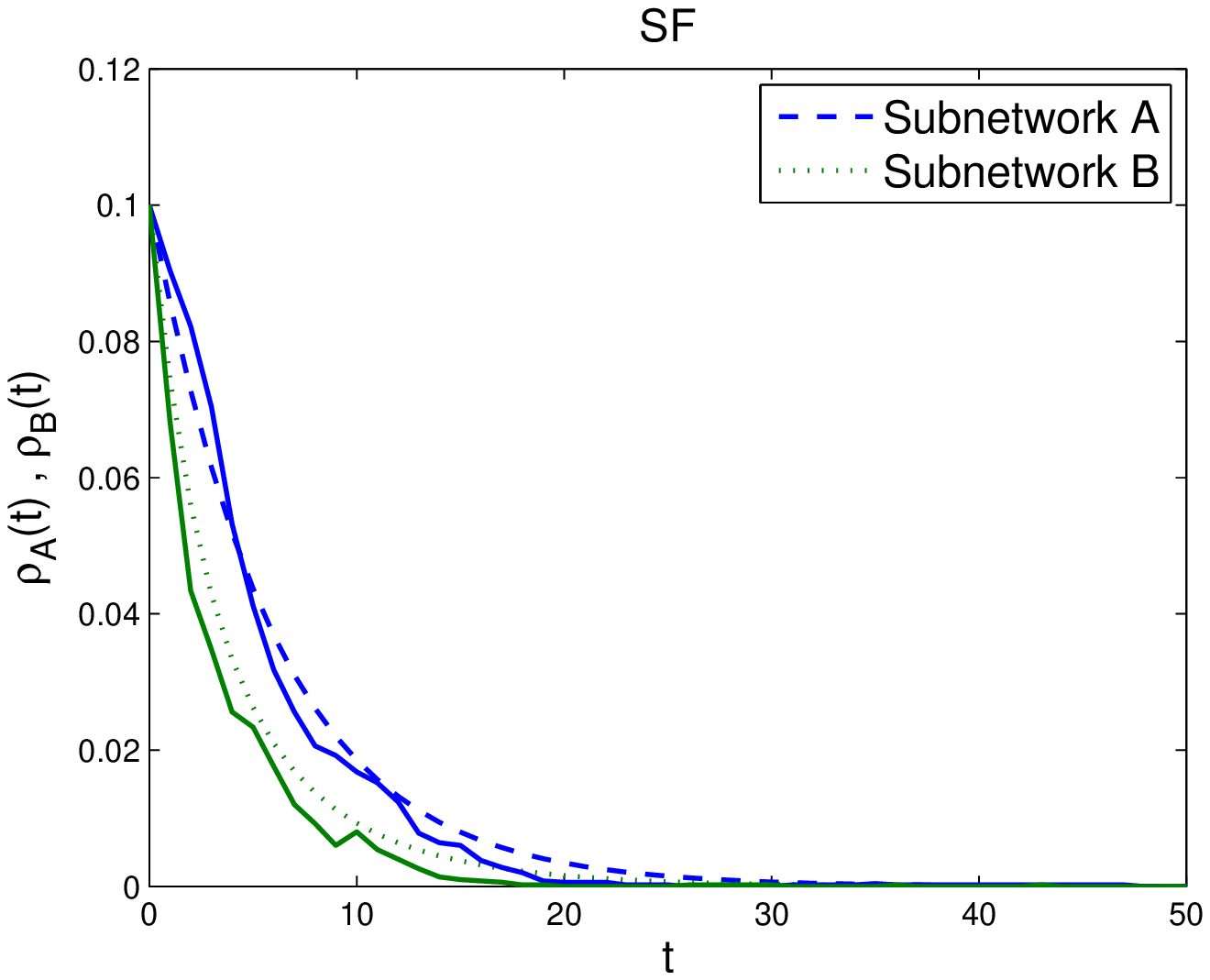}
  \end{minipage}
  }
  \caption{The total infected densities $\rho_A$ and $\rho_B$ with time $t$.}\label{fig-4}
\end{figure}

\begin{figure}[!h]
  \centering
  \subfigure{
  \begin{minipage}{7cm}
  (a)
  \centering
  \includegraphics[width=6cm]{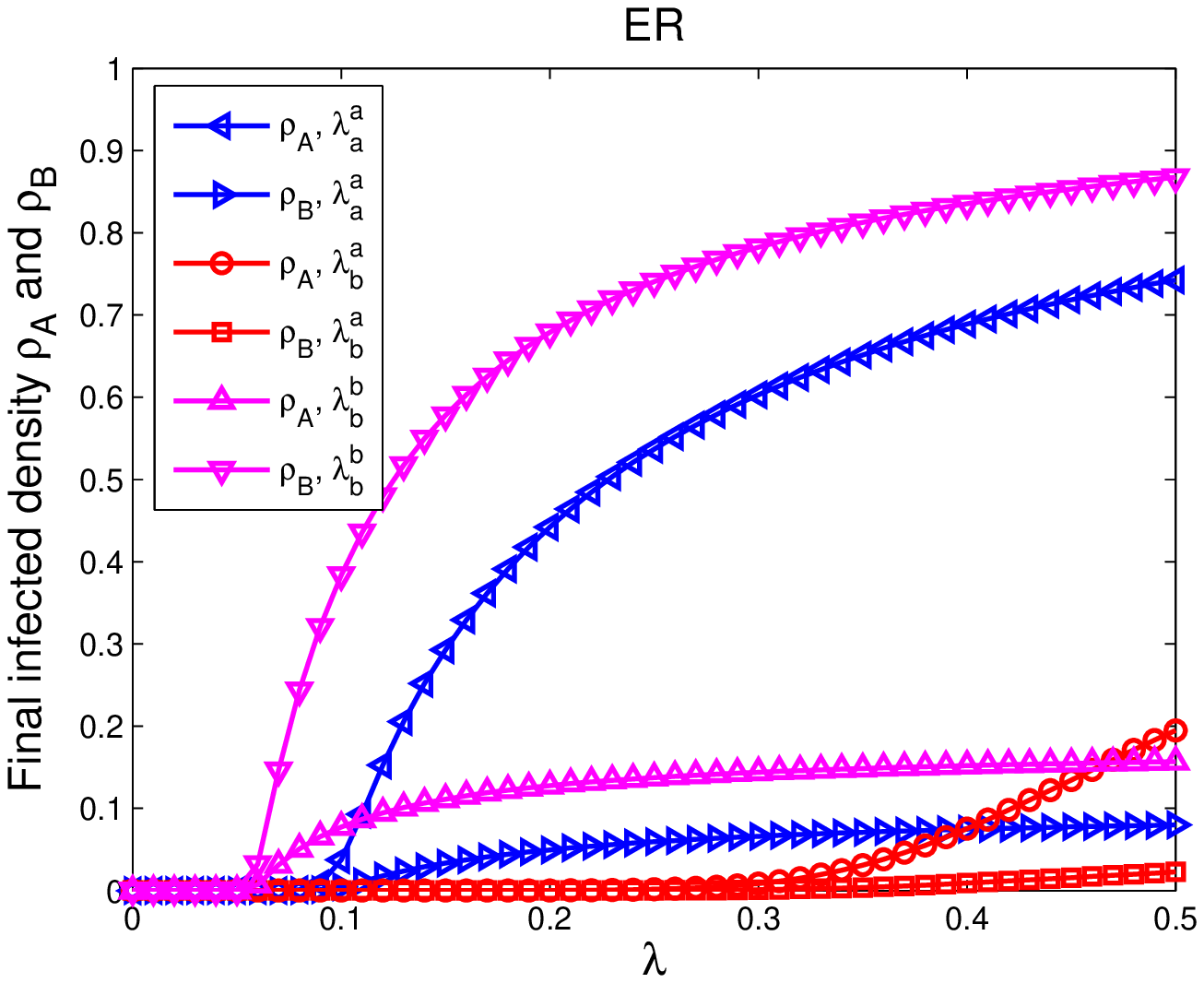}
  \end{minipage}
  }
  \subfigure{
  \begin{minipage}{7cm}
  (b)
  \centering
  \includegraphics[width=6cm]{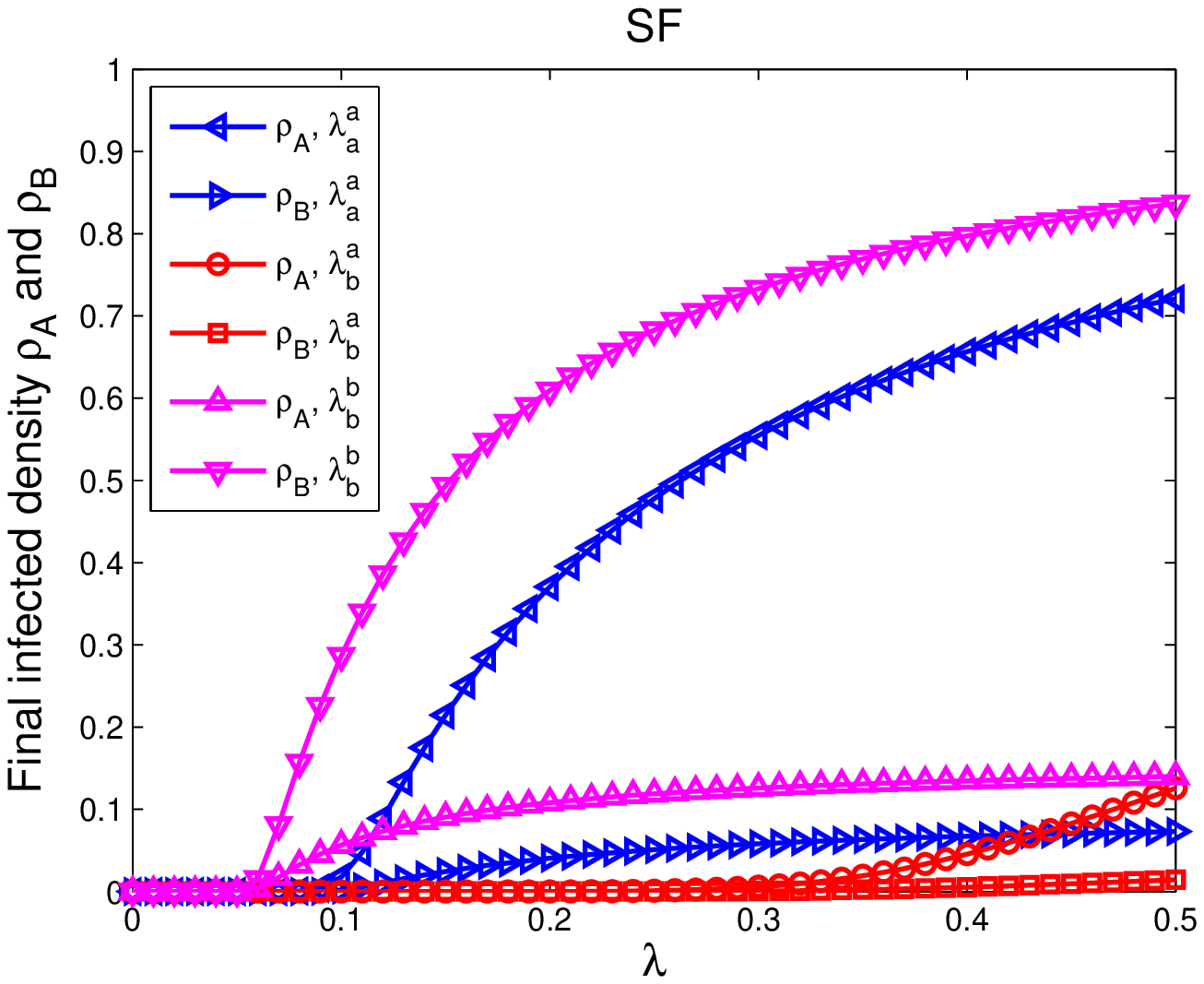}
  \end{minipage}
  }
  \caption{The final infected densities $\rho_A$ and $\rho_B$ over infectious rate $\lambda$.}\label{fig-5}
\end{figure}

We simulate the spread of disease on these two networks. The results
are plotted in Figures~\ref{fig-4}, the dashed lines are the results
of numerical simulation, and the corresponding solid lines are the single
performance of random simulation with the same parameters, respectively.
And the initial infected ratio is set to 0.1. Our simulations
are carried out with two groups of parameters. The simulations of first group is shown in (a) and (c),
where $\lambda_a^a=0.1$, $\lambda_b^a=0.05$, $\lambda_a^b=0.05$,
$\lambda_b^b=0.1$, $\mu_a=0.5$ and $\mu_b=0.4$, and the corresponding
$R_0$ is approximately equal to 3.617 ($>1$). The second is shown in (b) and (d),
where $\lambda_a^a=0.05$, $\lambda_b^a=0.01$, $\lambda_a^b=0.01$, $\lambda_b^b=0.05$, $\mu_a=0.4$ and $\mu_b=0.8$, the corresponding $R_0$ is equal to 0.725 ($<1$). We can see that there will exist an endemic when $R>1$,
otherwise, the disease will die out, and this is consistent with our theoretical
results in previous section. From these figures we obtain that the infected density of subnetwork
$A$ has the same  tendency with the one of subnetwork $B$, either becoming
an endemic or die out, but the specific densities of subnetworks may be
different. It is easy to understand that the two subnetworks are interconnected
each other, but the parameters, $\lambda$ and $\mu $, may be different.
In Figure~\ref{fig-4}, compare (a) with (c) and (b) with (d), we may find
that the tendency of infected densities in (c) and (d) are similar to
the ones in (a) and (b), respectively, though the sub-degrees of nodes in 'SF' satisfy
the power-law distribution. Because of the independence of the sub-degree,
for the nodes with large out-degree, its in-degree is not necessarily large.
Similarly, for the nodes with large in-degree, its out-degree is not necessarily large neither.
The independence of each component of joint degree reduce the effect of
power-law distribution on disease spreading, which is different from the
situation in undirected network.


Figure~\ref{fig-5} are the final infected densities with infectious rate $\lambda$,
which are the results of numerical simulations on ER and SF, respectively.
Except for the changing parameters, the other infectious rate is 0.01, and the
recovery rates are $\mu_a=0.4$, $\mu_b=0.6$. The initial infected rate is 0.1.
Though the structure of network SF is different from the one of network ER,
Figure~\ref{fig-5}(b) is similar to Figure~\ref{fig-5}(a). This once again shown
that the independence of sub-degree reduce the influence of structure on disease spread.


In addition, in Figure~\ref{fig-5}, it is shown that when the infectious rate
$\lambda$ is less than the critical value, the final infected density is 0.
When the infectious rate $\lambda$ is greater than the critical value, the final infected
density increases gradually with the increase of infectious rate. Comparing with
the inter-infectious rate, the intra-infectious rate has a greater impact on the
final infected density. What's more, the intra-infectious rate has a far greater
impact on the corresponding subnetwork than another subnetwork. For example, with
the increase of the intra-infectious rate $\lambda_a^a$, the final infected density
of subnetwork $A$ $\rho_A$ is greater than another final infected density $\rho_B$.
Although the impact of inter-infectious rate on the final infected density is relatively small,
it has the same effect as intra-infectious rate. For example, with the increase of
$\rho_b^a$, the infectious rate of subnetwork $B$ on subnetwork $A$, the final infected
density $\rho_A$ is greater than $\rho_B$.

\section{Conclusions and discussions}
\label{sec-5}

Motivated by the interaction between the actual systems and the direction
of information dissemination, we establish an SIS model in an interconnected
directed network for studying the epidemic spread. This network is the generalization
of undirected networks and bipartite networks, in other words, this interconnected
directed network can be transformed into these two kind of networks in special
condition. We theoretically analyze the model, and obtain the basic reproduction
number $R_0$, which is also a generalized threshold value. We prove that the disease
will become endemic if the $R_0$ greater than one, otherwise, the disease
will die out. We also give a condition for epidemic prevalence only on a single subnetwork.
By numerical analysis we find that the independence of joint
degree can greatly reduce the effect of heterogeneity of degree on disease
spread.

\subsection*{Acknowledgement}
This work was jointly supported by the NSFC grants under Grant Nos. 11331009 and 11572181.

\end{document}